\newcommand{\parag}[1]{ {\bf \noindent #1}}
\newcommand{\defeq}{\stackrel{\textup{def}}{=}}
\newcommand{\nfrac}{\nicefrac}
\newcommand{\eps}{\epsilon}
\newcommand{\barr}{\mathcal{B}}
\newcommand{\argmin}{\operatornamewithlimits{argmin}}
\renewcommand{\epsilon}{\varepsilon}
\def\showauthornotes{1} 
\def\showdraftbox{1}
\newtheorem{theorem}{Theorem}[section]
\newtheorem{conjecture}[theorem]{Conjecture}
\newtheorem{lemma}[theorem]{Lemma}
\newtheorem{remark}[theorem]{Remark}
\newtheorem{corollary}[theorem]{Corollary}
\newtheorem{claim}[theorem]{Claim}
\newtheorem{fact}[theorem]{Fact}
\def\FullBox{\hbox{\vrule width 6pt height 6pt depth 0pt}}
\def\qed{\ifmmode\qquad\FullBox\else{\unskip\nobreak\hfil
\penalty50\hskip1em\null\nobreak\hfil\FullBox
\parfillskip=0pt\finalhyphendemerits=0\endgraf}\fi}
\def\qedsketch{\ifmmode\Box\else{\unskip\nobreak\hfil
\penalty50\hskip1em\null\nobreak\hfil$\Box$
\parfillskip=0pt\finalhyphendemerits=0\endgraf}\fi}
\newenvironment{proofof}[1]{\begin{trivlist} \item {\bf Proof
#1:~~}}
  {\qed\end{trivlist}}
\newcommand\N{\mathbb N}
\newcommand\R{\mathbb R}
\newcommand{\marginlabel}[1]%
{\mbox{}\marginpar{\it{\raggedleft\hspace{0pt}#1}}}
\definecolor{Mygray}{gray}{0.8}
\let\csname ifcommentflag\expandafter\endcsname
\newcommand{\todo}[1]{\colorbox{Mygray}{\color{red}#1}}
\newcommand{\todo}[1]{}
\newcommand{\Authornote}[2]{{\sf\small\color{red}{[#1: #2]}}}
\newcommand{\Authoredit}[2]{{\sf\small\color{red}{[#1]}\color{blue}{#2}}}
\newcommand{\Authorfnote}[2]{\footnote{\color{red}{#1: #2}}}
\newcommand{\Authorfixme}[1]{\Authornote{#1}{\textbf{??}}}
\newcommand{\Authormarginmark}[1]{\marginpar{\textcolor{red}{\fbox{
#1:!}}}}
\newcommand{\Authornote}[2]{}
\newcommand{\Authoredit}[2]{}
\newcommand{\Authorcomment}[2]{}
\newcommand{\Authorfnote}[2]{}
\newcommand{\Authorfixme}[1]{}
\newcommand{\Authormarginmark}[1]{}
\newcommand{\inparen}[1]{\left(#1\right)}             
\newcommand{\inbraces}[1]{\left\{#1\right\}}           
\def\abs#1{\left| #1 \right|}
\newcommand{\norm}[1]{\ensuremath{\left\lVert #1 \right\rVert}}
\newcommand{\diag}[1]{{\sf Diag}\left({#1}\right)}
\newcounter{lecnum}
\newlength{\tpush}
\title{\bf   IRLS and  Slime Mold: \\ { Equivalence and  Convergence}}
\author{Damian Straszak}
\thanks{Damian Straszak, \'{E}cole Polytechnique F\'{e}d\'{e}rale de Lausanne (EPFL)} 
\author{Nisheeth K. Vishnoi}
\thanks{Nisheeth K. Vishnoi, \'{E}cole Polytechnique F\'{e}d\'{e}rale de Lausanne (EPFL)}
\begin{document}
\maketitle

\begin{abstract}
In this paper we present a connection between two dynamical systems arising in entirely different contexts: one in signal processing  and the other in biology. The first is the famous Iteratively Reweighted Least Squares (IRLS) algorithm used in compressed sensing and sparse recovery while the second is the dynamics of a slime mold ({\em Physarum polycephalum}). Both of these dynamics are geared towards  finding a minimum $\ell_1$-norm solution in an affine subspace.  Despite its simplicity the convergence of the IRLS method has been shown  only for a certain {\em regularization} of it and remains an important open problem \cite{Beck15,Daubechies10}. 
Our first result shows that the two dynamics are projections of the {\em same} dynamical system in higher dimensions. As a consequence, and building  on the recent work on Physarum dynamics, we are able to prove convergence and obtain complexity bounds for a {\em damped} version of the IRLS algorithm.

\end{abstract}
\maketitle

\tableofcontents

\thispagestyle{empty}

\newpage

\setcounter{page}{1}

\section{Introduction}

\paragraph{\bf Sparse recovery and basis pursuit} 
A classical task in signal processing is to recover a sparse signal from a small number of linear measurements. Mathematically, this  can be formulated as the problem of finding a solution  to a linear system $Ax=b$ where  $A\in \R^{m \times n}, \; b \in \mathbb{R}^m$ are given and $A$ has far fewer rows than columns (i.e. $m\ll n$). Among all the solutions, one would like to recover one with the fewest  non-zero entries. This problem, known as sparse recovery, is NP-hard and  we cannot hope to find an efficient algorithm in general. However, it has been observed experimentally that, when dealing with real-world data, a solution to the following $\ell_1$-minimization problem (also known as {\em basis pursuit}):
\begin{equation}\label{l1_min_intro}
\min \; \norm{x}_1 \qquad \mathrm{s.t. } \; \; Ax=b
\end{equation}
is typically quite sparse, if not of optimal sparsity. The history of theoretical investigations on how to explain the above phenomenon is particularly rich. It was first shown in~\cite{DH01, DE02} that the $\ell_1$-norm objective is in fact equivalent to sparsity for a specific family of matrices  and, later, the same was argued for a class of random matrices~\cite{CRT06}. Finally, the notion of Restricted Isometry Property (RIP) was formulated in~\cite{CT05} and shown to guarantee sparse recovery via~\eqref{l1_min_intro}. 
Consequently, optimization problems of the form~\eqref{l1_min_intro} became important building blocks for applications in signal processing and statistics. Thus, fast algorithms for solving such problems are desired. Note that~\eqref{l1_min_intro} can be cast as a linear program of size linear in $n$ and $m$ and, hence, any linear programming algorithm can be used to solve it. However, because of the special structure of the problem, many algorithms were developed which outperform standard LP solvers in terms of efficiency on real world instances. To make an algorithm applicable in practice another property is highly desirable: simplicity. This is not only for the ease of implementation, but also due to the fact that simple solutions are typically more robust and extendable to slightly different settings, such as noise tolerance. 

\medskip
\paragraph{\bf Iteratively Reweighted Least Squares} One of the simplest algorithms for solving problem~\eqref{l1_min_intro} is the  Iteratively Reweighted Least Squares algorithm (IRLS). IRLS is a very general scheme for solving optimization problems: it produces a sequence of points $y^{(0)}, y^{(1)}, \ldots$ with every $y^{(k+1)}$ obtained as a result of solving a weighted $\ell_2$-minimization problem, where the weights are appropriately chosen based on the previous point $y^{(k)}$. Let us now describe one extremely popular scheme of this kind, which is of main focus in this paper. We pick any starting point $y^{(0)} \in \R^n$. Then, $y^{(k+1)}$ is obtained from $y^{(k)}$ as the solution to the following optimization problem: 
\begin{equation}\label{intro_irls}
y^{(k+1)} \defeq \argmin_{x\in \R^n} \sum_{i=1}^n \frac{x_i^2}{\abs{y_i^{(k)}}} \qquad \mathrm{s.t. }  \; \; Ax=b.
\end{equation}
For this optimization problem to make sense, we need to assume  that $y_i^{(k)}\neq 0$ for every $i=1,2,\ldots, n$; however, the above can be made formal without this assumption. Importantly, the weighted $\ell_2$-minimization in~\eqref{intro_irls} can be solved via a formula which involves solving a linear system. The resulting  algorithm does not require any preprocessing of the data or any special rules for choosing a starting point. These properties make the algorithm particularly attractive for practical use and, indeed, the IRLS algorithm is quite popular; see for instance ~\cite{CY08,Green84}.
    However, from a theoretical viewpoint, the algorithm is still far from being understood. No global convergence analysis is known.
    One can construct examples to show that  there may be starting points for which the IRLS algorithm does not converge, see Appendix~\ref{app:nonconvergence}.
      The only known rigorous positive results~\cite{Osborne85}  concern the case when the algorithm is initialized very close to the optimal solution. It is an important open problem to establish global convergence of the IRLS algorithm. 

\medskip
\paragraph{\bf The dynamics of a slime mold.} In a seemingly unrelated story,   in 2000  a striking experiment demonstrated that a slime mold ({\it Physarum polycephalum}) can solve the shortest path problem in a maze \cite{NTY00}. The need to explain how, resulted in a mathematical model~\cite{TKN07} which was a dynamical system;  we (loosely) refer to this dynamical system as Physarum dynamics. 
Subsequently, this model was successfully analyzed mathematically and generalized to many different graph problems (\cite{MO07,IJNT11,BMV12,BBDKM13,SV15}). 
In this work we propose an extension of the Physarum dynamics for solving the basis pursuit problem.   Given $A,b$ as before, we let $w^{(0)}\in \R_{>0}^n$ to be any point with positive coordinates and pick any step size $h\in (0,1)$. The discrete Physarum dynamics iterates according to the following formula:
\begin{equation}\label{intro_physarum}
w^{(k+1)} \defeq (1-h) w^{(k)} + h \abs{q^{(k)}}.
\end{equation}
In the above, $q^{(k)}$ is the vector that minimizes $\sum_{i=1}^n \frac{x_i^2}{w_i^{(k)}}$ over all $x\in \R^n$ such that $Ax=b$. The absolute value of $q^{(k)}$ should be understood entry-wise. The above is a generalization of the Physarum dynamics for the shortest $s-t$ path problem in an undirected graph~\cite{TKN07}, for which
it was shown by \cite{BMV12} that $w^{(k)}$ converges to the characteristic vector of the shortest $s-t$ path in $G$. 
Interestingly, since $w^{(k)}$ remains a positive vector at every step $k$, the vector $w^{(k)}$ may not converge to the optimal solution. In Section \ref{sec:comparison} we explain how to define an auxiliary sequence $y^{(k)}$ which converges to an optimal solution.

\paragraph{\bf IRLS vs. Physarum.}
Both algorithms, IRLS and Physarum can be seen as discrete dynamical systems, with updates based on a certain weighted $\ell_2$-minimization, however no formal relation between them is apparent.  Our first result connects these two algorithms. Both of these algorithms are naturally viewed  as discrete dynamical systems over a $2n$-dimensional domain $\Gamma \defeq \{(y,w): y\in \R^n, w\in \R_{> 0}^n\}$ with the vector field $F:\Gamma \to \R^n\times \R^n$ defined as:
\begin{align}\label{vector_field}
\begin{split}
F(y,w) &\defeq (q-y,|q|-w)\\
 q& \defeq \argmin_{x\in \R^n} \sum_{i=1}^n \frac{x_i^2}{w_i} \qquad \mathrm{s.t. } \; \;  Ax=b.
\end{split}
\end{align}
More precisely we prove the following theorem in Section~\ref{sec:comparison}.

\begin{theorem}[Informal]\label{thm:main1}
Given a starting point $(y^{(0)}, w^{(0)})\in \Gamma$ and $h\in (0,1]$ let us consider the sequence $\inbraces{(y^{(k)}, w^{(k)})}_{k\in \N}$ generated by taking steps in the direction suggested by $F$:
\begin{equation}\label{general_update}
({y}^{(k+1)}, {w}^{(k+1)})=(1-h)({y}^{(k)}, {w}^{(k)})+hF({y}^{(k)}, {w}^{(k)}). 
\end{equation}
When $h=1$ the sequence $\inbraces{{y}^{(k)}}_{k\in \N}$ is identical to that produced by IRLS, while for $h\in (0,1)$, the sequence $\inbraces{{w}^{(k)}}_{k\in \N}$ is equivalent to Physarum dynamics.
\end{theorem}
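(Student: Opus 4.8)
The statement is an identity between two sequences, so the plan is to verify it by unwinding \eqref{general_update} one coordinate block at a time and matching the outcome against the defining recurrences \eqref{intro_irls} and \eqref{intro_physarum}. Carried out along the $y$- and $w$-blocks, the update separates into the two recurrences
\[
y^{(k+1)} = (1-h)\,y^{(k)} + h\,q^{(k)}, \qquad w^{(k+1)} = (1-h)\,w^{(k)} + h\,\abs{q^{(k)}}.
\]
The single structural fact that makes everything go through is that the auxiliary minimizer $q^{(k)} \defeq \argmin_{Ax=b}\sum_i x_i^2/w_i^{(k)}$ depends on the state $(y^{(k)},w^{(k)})$ only through $w^{(k)}$. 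Thus the vector field $F$ is triangular: the $w$-block is self-contained, while the $y$-block is merely driven by $q^{(k)}$. I would first record this observation together with the closed form $q^{(k)} = W_k A^\top\big(A W_k A^\top\big)^{-1} b$, where $W_k \defeq \diag(w^{(k)})$; this both exhibits the dependence on $w^{(k)}$ alone and certifies that $q^{(k)}$ is well defined and unique whenever $A$ has full row rank and $w^{(k)}>0$.

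For the Physarum direction I would simply read off the $w$-block. It is verbatim \eqref{intro_physarum}, and since $q^{(k)}$ is a function of $w^{(k)}$ alone, $\inbraces{w^{(k)}}_{k\in\N}$ is a closed autonomous subsystem that reproduces the Physarum iteration irrespective of the $y$-coordinates. This half is immediate once the triangular structure is in hand.

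For the IRLS direction I would set $h=1$. The $y$-update then collapses to $y^{(k+1)} = q^{(k)}$ and the $w$-update to $w^{(k+1)} = \abs{q^{(k)}} = \abs{y^{(k+1)}}$. Hence from the first iterate onward $w^{(k)} = \abs{y^{(k)}}$, so the weights defining $q^{(k)}$ are exactly $1/\abs{y_i^{(k)}}$; substituting back gives $y^{(k+1)} = \argmin_{Ax=b}\sum_i x_i^2/\abs{y_i^{(k)}}$, which is precisely \eqref{intro_irls}. The clean identification therefore presumes the initialization $w^{(0)} = \abs{y^{(0)}}$ — without it the two sequences agree only for $k\ge 1$ — and I would state this convention explicitly.

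The algebra above is routine; the step that needs genuine care is the degenerate case flagged after \eqref{intro_irls}, where some coordinate $y_i^{(k)}=0$ (equivalently $w_i^{(k)}=0$) makes the reciprocal weight infinite and the naive objective undefined. I expect the real work to lie in making the correspondence robust here: one reads a zero weight as the hard constraint $x_i=0$, i.e. one minimizes the weighted $\ell_2$ objective over the support $\set{i : w_i^{(k)}>0}$ only, and one must check that both \eqref{intro_irls} and \eqref{intro_physarum} honor this convention so that the blockwise projection still goes through. Verifying that $q^{(k)}$, $y^{(k)}$ and $w^{(k)}$ remain mutually consistent with this support convention from one iteration to the next is the only part that is not a one-line computation.
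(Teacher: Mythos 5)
Your argument is correct, but it takes a genuinely more direct route than the paper's. In Section~\ref{sec:comparison} the paper proves the two halves with separate devices: for Physarum it introduces the invariant set $K=\inbraces{w>0:\exists y,\ Ay=b,\ \abs{y}\leq w}$ and shows by induction that a hidden feasible $y^{(k)}$ co-evolves with $w^{(k)}$ under exactly your two block recurrences; for IRLS it recasts the iteration as alternate minimization of $J(y,w)=\sum_{i} y_i^2/w_i+\sum_i w_i$, so that the hidden weight $w^{(k)}=\abs{y^{(k)}}$ arises as the exact partial minimizer in $w$ (Facts~\ref{fact_step1} and~\ref{fact_step2}) rather than by substitution, as in your $h=1$ computation. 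Your triangularity observation --- $q$ depends on the state only through $w$, so the $w$-block is autonomous and the $y$-block is merely driven --- replaces both devices and is shorter. What the paper's machinery buys in exchange: the invariance of $K$ yields $Ay^{(k)}=b$ and $\abs{y^{(k)}}\leq w^{(k)}$ for all $k$, which is precisely what Theorem~\ref{thm:convergence} later uses to recover a feasible near-optimal point, and the variational view gives monotonicity of $\norm{y^{(k)}}_1$ for free via $J(y^{(k)},w^{(k)})=2\norm{y^{(k)}}_1$. Your two caveats (the initialization convention $w^{(0)}=\abs{y^{(0)}}$, and the zero-support convention) are well placed and match the paper's own informal treatment; note additionally that the support issue can arise only at the IRLS endpoint $h=1$, since for $h\in(0,1)$ one has $w^{(k)}>0$ for all $k$, exactly as the paper remarks at the end of Section~\ref{sec:comparison}. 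Finally, you silently repaired a typo that deserves to be explicit: taken literally, \eqref{general_update} with $F(y,w)=(q-y,\abs{q}-w)$ would give $y^{(k+1)}=(1-2h)\,y^{(k)}+h\,q^{(k)}$, double-counting the damping; the intended update is the Euler step $(y^{(k)},w^{(k)})+hF(y^{(k)},w^{(k)})$, which is what your recurrences express and what the paper's own description of IRLS moving to $(x,w)+F(x,w)$ confirms.
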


\noindent
The above tells us additionally that IRLS and Physarum are complimentary in terms of their descriptions, since the variables $y$ appear in the definition of IRLS, while $w$ are implicit (and vice versa for Physarum).

Our second contribution is a global convergence analysis for Physarum dynamics which implies the same for the {\em damped} version of IRLS. We state it informally below; several details are omitted and only the dependence on $\eps$ is emphasized, the quantities depending on the dimension and the input data are denoted by $C_1$ and $C_2$. For a precise formulation we refer to Theorem~\ref{thm:convergence}.

\begin{theorem}[Informal]\label{thm:main2}
Suppose we initialize the Physarum dynamics at an appropriate point $w^{(0)}$. Take an arbitrary $\eps>0$ and choose $h\leq \frac{\eps}{C_1}$. If we generate a sequence $\inbraces{w^{(k)}}_{k\in \N}$ according to the Physarum dynamics~\eqref{intro_physarum}, then after $k=\frac{C_2}{h\eps^2}$ steps one can compute a vector $y^{(k)}\in \R^n$ such that $Ay^{(k)}=b$ and $\norm{y^{(k)}}_1 \leq \norm{w^{(k)}}_1\leq \norm{x^\star}_1 \cdot (1+\eps),$
where $x^\star$ is any optimal solution to~\eqref{l1_min_intro}.
\end{theorem}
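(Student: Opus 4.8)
The plan is to study the continuous-time dynamics $\dot w = \abs{q(w)}-w$ underlying the update~\eqref{intro_physarum} by exhibiting a Lyapunov function, and then to transfer the analysis to the damped discrete dynamics by controlling the first-order discretization error through the choice $h\le\eps/C_1$. Throughout I write $W=\diag{w}$, $q=q(w)=WA^\top(AWA^\top)^{-1}b$ for the weighted least-squares minimizer, and $\ene(w)=\sum_i q_i^2/w_i=b^\top(AWA^\top)^{-1}b$ for its energy, and I use the envelope identity $\partial\ene/\partial w_i=-q_i^2/w_i^2$, the homogeneity $\ene(\lambda w)=\ene(w)/\lambda$, and convexity of $\ene$. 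The two inequalities $\norm{y^{(k)}}_1\le\norm{w^{(k)}}_1$ and $\opt\le\norm{y^{(k)}}_1$ are essentially free: I would define $y^{(k+1)}\defeq(1-h)y^{(k)}+h\,q^{(k)}$ from a feasible $y^{(0)}$ with $\abs{y^{(0)}}\le w^{(0)}$ coordinatewise (this is what ``appropriate $w^{(0)}$'' provides). Since $Aq^{(k)}=b$, the affine update preserves $Ay^{(k)}=b$, so $\opt\le\norm{y^{(k)}}_1$; and since $w^{(k+1)}=(1-h)w^{(k)}+h\abs{q^{(k)}}$, a one-line induction with the triangle inequality gives $\abs{y^{(k)}}\le w^{(k)}$, hence $\norm{y^{(k)}}_1\le\norm{w^{(k)}}_1$. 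Positivity $w^{(k)}>0$ is automatic for $h\in(0,1)$, so all the content lies in the upper bound $\norm{w^{(k)}}_1\le\opt\cdot(1+\eps)$.

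Two structural facts drive the argument. By Cauchy--Schwarz, $\norm{q}_1^2=\big(\sum_i\abs{q_i}\big)^2\le\big(\sum_i q_i^2/w_i\big)\big(\sum_i w_i\big)=\ene(w)\norm{w}_1$, and since $q$ is feasible, $\opt\le\norm{q}_1$; hence $\ene(w)\norm{w}_1\ge\opt^2$ for every $w$, with equality exactly at the fixed points $w=\abs{q}$ that correspond to optimal solutions. This suggests the Lyapunov function $V(w)\defeq\log\big(\ene(w)\norm{w}_1\big)$, bounded below by $2\log\opt$. Writing $a_i\defeq\abs{q_i}/w_i$ and $\mu_i\defeq w_i/\norm{w}_1$, the envelope identity gives along $\dot w=\abs q-w$
\[
\frac{d}{dt}V \;=\; \Big(\sum_i\mu_i a_i\Big)\;-\;\frac{\sum_i\mu_i a_i^3}{\sum_i\mu_i a_i^2}\;=\;-\,\frac{\mathrm{Cov}_\mu(a,a^2)}{\sum_i\mu_i a_i^2}\;\le\;0,
\]
the inequality being Chebyshev's sum inequality, since $a\ge0$ and $a\mapsto a^2$ are comonotone. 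Thus $V$ decreases monotonically to its floor $2\log\opt$, at a rate equal to the weighted covariance of $a$ and $a^2$, which vanishes precisely when all $a_i$ are equal, i.e.\ at optimality.

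For the damped discrete dynamics I would expand $V(w^{(k+1)})-V(w^{(k)})$ to first order in $h$: the linear term reproduces $h\cdot\frac{d}{dt}V\le0$, while the remainder is $O(h^2)$ with a constant governed by the smoothness of $V$ along the segment from $w^{(k)}$ to $w^{(k+1)}$ (equivalently, by how much $\ene$ and $q$ move in one step). Bounding this second-order term and imposing $h\le\eps/C_1$ makes the guaranteed per-step decrease of order $h$ times the covariance minus an $O(h^2)$ error, so $V$ keeps dropping until the covariance --- and with it the relative gap $\ene(w)\norm{w}_1/\opt^2-1$ --- falls below $O(\eps^2)$; telescoping over $k=C_2/(h\eps^2)$ steps yields $\ene(w^{(k)})\norm{w^{(k)}}_1\le\opt^2(1+\eps)$. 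I then convert this product bound into the bound on $\norm{w}_1$ alone: the (approximate) monotonicity of $V$ keeps $\norm{q^{(k)}}_1\le\sqrt{\ene(w^{(k)})\norm{w^{(k)}}_1}\le\opt\sqrt{1+\eps}$ thereafter, and feeding this into $\norm{w^{(k+1)}}_1=(1-h)\norm{w^{(k)}}_1+h\norm{q^{(k)}}_1$ shows $\norm{w^{(k)}}_1$ contracts geometrically toward $\opt\sqrt{1+\eps}$, giving $\norm{w^{(k)}}_1\le\opt(1+\eps)$ after the allotted steps and a rescaling of $\eps$.

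The main obstacle is the discretization step. Establishing the per-step decrease requires quantitative smoothness bounds for $\ene$ and for the minimizer $q$ as $w$ varies; these are where the data-dependent constants $C_1,C_2$ enter (through $\norm{A}$, the conditioning of $AWA^\top$, and a priori bounds keeping $w^{(k)}$ in a compact region with coordinates bounded away from $0$ and $\infty$). The delicate point is that the covariance driving the decrease degenerates \emph{quadratically} near the optimum, so certifying a first-order-in-$\eps$ bound on $\norm{w}_1$ forces one to push the second-order product gap down to $O(\eps^2)$; this is exactly what dictates both the small step $h\le\eps/C_1$ and the $1/\eps^2$ growth of the iteration count, and it is the part of the argument that most relies on the quantitative Physarum machinery referenced in the introduction.
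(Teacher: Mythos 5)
Your continuous-time computation is correct as far as it goes: with $a_i=\abs{q_i}/w_i$ and $\mu_i=w_i/\norm{w}_1$ one indeed gets $\tfrac{d}{dt}\log\inparen{\ene(w)\norm{w}_1}=-\mathrm{Cov}_\mu(a,a^2)/\sum_i\mu_i a_i^2\le 0$, and the inequality $\ene(w)\norm{w}_1\ge\opt^2$ holds (the paper's Lemma~\ref{lemma:norm_drop} is essentially the discrete shadow of the $\norm{w}_1$ half of this). The genuine gap is the inference that monotone decrease of $V$ drives it to its floor $2\log\opt$ at a quantifiable rate. The covariance vanishes whenever $a_i$ is constant across coordinates, and such configurations are \emph{not} confined to optima: in the shortest-path instantiation ($A$ an incidence matrix), the indicator vector of \emph{any} $s$--$t$ path $P$ is a stationary configuration with $\ene\cdot\norm{w}_1=\abs{P}^2$, which exceeds $\opt^2$ unless $P$ is shortest. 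Near such boundary points your covariance is arbitrarily small while the gap $\ene(w)\norm{w}_1/\opt^2-1$ stays bounded away from zero, so the step ``$V$ keeps dropping until the gap falls below $O(\eps^2)$'' has no justification, and the telescoping count of $C_2/(h\eps^2)$ iterations has no engine behind it. The paper supplies exactly this missing engine with a \emph{second} potential, the barrier $\barr(k)=\sum_i x_i^\star\ln w_i^{(k)}$: testing $q^{(k)}=W^{(k)}A^\top L^{-1}b$ against the optimal solution gives $\sum_i x_i^\star\,\abs{q_i^{(k)}}/w_i^{(k)}\ge b^\top L^{-1}b=E(k)$ (Fact~\ref{fact:energy}), so $\barr$ gains roughly $h\inparen{E(k)-\norm{x^\star}_1}$ per step (Lemma~\ref{lemma:barr}); since $\barr$ is bounded above, the energy cannot stay large for many steps, and whenever it is small Lemma~\ref{lemma:norm_drop} forces a geometric drop of $\norm{w^{(k)}}_1$. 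Some such dual certificate, tied to a specific optimal $x^\star$, is indispensable and has no counterpart in your argument.

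The discretization plan has a second, related defect. You propose to bound the $O(h^2)$ remainder via smoothness of $\ene$ and $q$, with constants controlled by the conditioning of $AWA^\top$ and an a priori confinement of $w^{(k)}$ to a compact region with coordinates bounded away from $0$. No such confinement exists: convergence \emph{requires} $w_i^{(k)}\to 0$ for every $i\notin\supp(x^\star)$, and the local Lipschitz constants you invoke blow up precisely there. The paper's substitute is the scale-free estimate of Lemma~\ref{lemma:technical}, $\abs{a_i^\top(AWA^\top)^{-1}a_j}\le\alpha/w_i$ with $\alpha$ depending only on $A$ (a subdeterminant bound), which combined with the hidden feasible iterate $\abs{y^{(k)}}\le w^{(k)}$ yields the uniform bound $\abs{q_i^{(k)}}/w_i^{(k)}\le n\alpha$ for \emph{all} $w>0$ (Corollary~\ref{cor:bounded_potentials}); this is what tames the quadratic terms $h^2z_i^2$ in $\ln(1+hz_i)$ and fixes $C_1=40n^2\alpha^2$. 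Your recovery of $y^{(k)}$ via the coupled update $(1-h)y^{(k)}+hq^{(k)}$ and the sandwich $\opt\le\norm{y^{(k)}}_1\le\norm{w^{(k)}}_1$ are correct and identical to the paper's mechanism, but the core quantitative argument would fail near the boundary fixed points without the barrier potential and the uniform $n\alpha$ bound.
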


\section{Related Work} 
 
\parag{IRLS.} Many different algorithms based on IRLS have been proposed for solving a variety of optimization problems. The book~\cite{Osborne85} presents (among others) the IRLS method for $\ell_1$-minimization and proves a local convergence result (assuming the starting point is sufficiently close to the optimum and no zero-entries appear in the iterates). The paper~\cite{GR97} discusses a number of different IRLS schemes for finding sparse solutions to underdetermined linear systems. It provides convergence results for a family of such methods, but the algorithm studied in our paper is not covered. In \cite{RKD99},  IRLS schemes for minimizing $\inparen{\sum_{i=1}^n \abs{x_i}^p}^{1/p}$ are proposed, the scheme given for $p=1$ matches our setting, however no global convergence results are obtained.

 We now discuss another line of work, for which rigorous convergence results are known. To circumvent mathematical difficulties related to zero-entries appearing in IRLS iterates one can choose a small positive constant $\eta>0$ and define a modified version of the IRLS update:
\begin{equation}\label{irls_update_reg}
x^{(k+1)} =  \argmin \inbraces{\sum_{i=1}^n \frac{x_i^2}{\sqrt{\abs{x_i^{(k)}}^2+\eta^2}}: x\in \R^n, Ax=b}.
\end{equation}
Note that the above minimization problem makes perfect sense even when $x_i^{(k)}=0$ for some $i$. Consequently, it has a unique solution, for every choice of $x^{(k)}$. It was proved in~\cite{Beck15} that the sequence of points produced by scheme~\eqref{irls_update_reg} converges to the optimal solution of:
\begin{align}\label{l1_min_reg}
\begin{split}
\min \; \quad & \sum_{i=1}^n \inparen{x_i^2+\eta^2}^{1/2} \\
\mathrm{s.t. } \quad & Ax=b.\\ 
\end{split}
\end{align}
The number of iterations required to get $\eps$-close to the optimal solution is bounded by $O\inparen{\frac{C}{\eps }}$, where $C$ is a quantity depending on $A,b$ and $\eta$. 

The function $\sum_{i=1}^n \inparen{x_i^2+\eta^2}^{1/2}$ approximates the $\ell_1$ norm in the following sense:
$$\forall x\in \R^n \qquad \norm{x}_1 \leq \sum_{i=1}^n \inparen{x_i^2+\eta^2}^{1/2} \leq \norm{x}_1+n\cdot \eta.$$
\noindent
In the case when the matrix $A$ satisfies a variant of RIP (Restricted Isometry Property), \cite{Daubechies10} showed that a scheme similar to~\eqref{l1_min_reg} (with $\eta_k \to 0$ in place of constant $\eta$) converges to the $\ell_1$-optimizer. The proof relies on non-constructive arguments (compactness is repeatedly used to obtain certain  accumulation points) hence no quantitative bounds on the global convergence rate follow from this analysis.

\parag{Physarum dynamics.}
The discrete Physarum dynamics we propose for the basis pursuit problem can be seen as an analogue of the similarly looking, but technically very different, dynamics for linear programming studied in~\cite{JZ12, SVLP15}. Our second main result  (Theorem \ref{thm:convergence}) builds up,  extends and simplifies a recent result \cite{SV15} of the authors for the case of {\em flows}; when the matrix $A$ corresponds to an incidence matrix of an undirected graph. For more on prior work on Physarum dynamics, the reader is referred to \cite{SV15}.

\newpage

\section{Preliminaries}\label{section:preliminaries}
\medskip
\parag{Notation for sets, vectors and matrices.}
The set $\{1,2,\ldots,n\}$ is denoted by $[n]$. All vectors considered  are column vectors. By $x\in \R^Q$, for some finite set $Q$, we mean a $|Q|$-dimensional real vector indexed by elements of $Q$, similarly for matrices. If $x\in \R^n$ is a vector then $x_S$ for $S\subseteq [n]$ denotes a vector in $\R^S$ which is the restriction of $x$ to indices in $S$.

The basis pursuit problem is to find  a minimum $\ell_1$-norm solution to the linear system $Ax=b$, where $A$ is an $m\times n$ matrix. We assume that $A$ has rank $m$.\footnote{It is enough here to assume $b\in \mathrm{Im}(A)$ only. To simplify notation, we work with the full-rank assumption. One can reduce the general case to full-rank by removing some number of rows from $A$. Both dynamics remain the same.}The $i$-th column of $A$ is denoted by $a_i\in \R^m$.

If $x\in \R^n$ then by $X$ we mean an $n\times n$ real diagonal matrix with $x$ on the diagonal, i.e. $X=\diag{x}$. 
Whenever $x\in \R^n$ is a vector and a scalar operation is applied to it, the result is a vector with this scalar operation applied to every entry. For example $\abs{x}$ denotes a vector $y\in \R^n$ with $y_i = |x_i|$ for every $i\in [n]$. When writing inequalities between vectors, like $x\leq y$ (for $x,y\in \R^n$) we mean that $x_i\leq y_i$ for all $i\in [n]$, also $x>0$ means $x_i>0$ for every $i\in [n]$.

For a symmetric matrix $M\in \R^{d\times d}$ we denote by $M^{+}$ its Moore-Penrose pseudoinverse. It satisfies  $MM^{+}x=M^{+}Mx=x$ for every $x\in \R^d$ from the image of $M$. 

\medskip
\parag{Weighted $\ell_2$-minimization.}
The weighted $\ell_2$-minimization problem is the following: for a given matrix $A\in \R^{m\times n},$ vector $b\in \R^m$ and weights $s\in \R^n_{>0}$ find:
\begin{align*}
\argmin_{x\in \R^n} \sum_{i=1}^n s_i x_i^2 \qquad 
\mathrm{s.t. }\; \;  A x=b.
\end{align*}
One can show that if the linear system $Ax=b$ has a solution, then the above has a unique solution $q\in \R^n$ which can be computed as:
$$q = SA^\top (ASA^\top)^+b.$$

\subsection{The IRLS algorithm}
We now present the Iteratively Reweighted Least Squares (IRLS) algorithm.\footnote{As mentioned before, IRLS is in fact a general algorithm scheme; however, in the remaining part of the paper by IRLS we always mean the specific IRLS for basis pursuit.} For readability,  some technical details are omitted; however, we leave remarks wherever additional care is required. Consider the basis pursuit problem:

\begin{align}\label{l1_min}
\begin{split}
\min \; \quad & \norm{x}_1 \\
\mathrm{s.t. }\; \quad & Ax=b
\end{split}
\end{align}
where $x\in \R^n$, $A\in \R^{m\times n}$ and $b\in \R^m$. The algorithm starts from an arbitrary point $y^{(0)}\in \R^n$, e.g. $y^{(0)}=(1,1,\ldots,1)^\top,$ and performs the following iterations for $k=0,1,2,\ldots:$
\begin{equation}\label{irls_update}
y^{(k+1)}= \argmin \inbraces{\sum_{i=1}^n \frac{x_i^2}{\abs{y_i^{(k)}}}: x\in \R^n, Ax=b}.
\end{equation}
Thus, the new point is a result of $\ell_2$-minimization with weights coming from the previous iteration. 
\begin{remark}
Note that the above is well defined only if $y_i^{(k)}\neq 0$ for every $i\in [n]$. Additional care is required to deal with the case where some $y_i^{(k)}$ are zero. Informally, one can imagine that if $y_i^{(k)}=0$ then the weight on the $i$-th coordinate is $+\infty$; hence, one is forced to choose $x_i=0$. In fact the formal treatment follows this intuition: whenever $y_i^{(k)}=0$, one adds a hard constraint $x_i=0$ and performs the weighted $\ell_2$-norm minimization over the non-zero coordinates.
\end{remark}

\noindent
We remark that the $\ell_2$-minimization problem in the update rule has a closed form solution involving a projection:
$$y^{(k+1)} = Y^{(k)}A^\top(AY^{(k)} A^\top)^{+}b,$$
where $Y^{(k)}\defeq \diag{y^{(k)}}.$

It is easy to show that the $\ell_1$-norm of the subsequent iterates $y^{(1)}, y^{(2)}, \ldots$ is non-increasing; however,  this does not necessarily imply that IRLS converges to the optimal solution. In fact no result on global convergence (to an optimal solution to~\eqref{l1_min}) is known for IRLS. While the convergence is indeed observed in practice, it remains open to prove this. One issue is that there are examples of instances and starting points, where the sequence provably does not converge to the optimal solution; see Appendix~\ref{app:nonconvergence}. 
However, we believe that the following conjecture might hold  regarding the convergence of IRLS. 
\begin{conjecture}
The set of starting points $y^{(0)}\in \R^n$ for which the sequence $\inbraces{y^{(k)}}_{k\in \N}$ generated by IRLS does not converge to an optimal solution to~\eqref{l1_min} is of measure zero.
\end{conjecture}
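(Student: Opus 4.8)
The plan is to treat IRLS as the iteration of a single map and to prove the conjecture by a combination of a Lyapunov argument and stable-manifold machinery, in the spirit of the results showing that first-order methods avoid saddle points from almost every starting point. First I would record that, after one step, every iterate satisfies $Ax=b$: writing the update as $y^{(k+1)}=T(y^{(k)})$ with $T(y)\defeq |Y|A^\top(A|Y|A^\top)^{+}b$ and $|Y|\defeq\diag{|y|}$, one has $AT(y)=b$ whenever $A|Y|A^\top$ is invertible (which holds when $|y|>0$ and $A$ has rank $m$). Hence the dynamics really lives on the affine subspace $\mathcal{A}\defeq\inbraces{x\in\R^n:Ax=b}$ of dimension $n-m$, and it suffices to prove the measure-zero claim there. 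On the open, full-measure subset of $\mathcal{A}$ on which all coordinates are nonzero and $A|Y|A^\top$ has full rank, $T$ is smooth and, being a composition of invertible pieces, a local diffeomorphism; this is the regime in which the manifold machinery applies.

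Next I would characterize the fixed points. Setting $T(y)=y$ and $\lambda\defeq(A|Y|A^\top)^{-1}b$ gives $y_i=|y_i|\,a_i^\top\lambda$, so on the support of $y$ one has $a_i^\top\lambda=\sgn(y_i)$, which is exactly the equality part of the KKT conditions for~\eqref{l1_min}. The fixed points therefore split into two families: the \emph{optimal} ones, for which in addition $|a_j^\top\lambda|\le 1$ off the support, and the \emph{spurious} ones, for which $|a_j^\top\lambda|>1$ for some off-support coordinate $j$. The conjecture is then equivalent to the statement that the set of starting points whose orbit either converges to a spurious fixed point or fails to converge altogether has measure zero.

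To control non-convergent behaviour I would use the $\ell_1$-norm as a Lyapunov function. The majorization inequality $x_i^2/|y_i|\ge 2|x_i|-|y_i|$, with equality iff $|x_i|=|y_i|$, summed over $i$ and combined with the fact that $T(y)$ minimizes $\sum_i x_i^2/|y_i|$ over $\mathcal A$, yields $\norm{T(y)}_1\le\norm{y}_1$ with equality only at fixed points. Since $\inbraces{\norm{y^{(k)}}_1}_k$ is then non-increasing, orbits are bounded, their $\omega$-limit sets are nonempty and contained in the fixed-point set, and periodic or chaotic limit sets are ruled out (the Lyapunov function would have to be constant on them). Thus, away from the measure-zero locus where a continuum of optimal solutions occurs (a non-generic event for the right-hand side $b$), almost every bounded orbit converges to a single fixed point, and the task reduces to showing that the union of basins of the spurious fixed points is null. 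For this I would compute the Jacobian $DT$ at a spurious fixed point and exhibit an eigenvalue of modulus strictly greater than $1$ coming from the violated dual constraint $|a_j^\top\lambda|>1$; the center-stable manifold theorem then makes the local stable set a submanifold of positive codimension, and pulling it back through the countably many iterates of the local diffeomorphism $T$ keeps the global stable set measure zero. As a sanity check and a possible independent route, one might try to transfer the global convergence of the damped dynamics (Theorem~\ref{thm:main2}) to the undamped case $h=1$ by a perturbation argument in $h$, although the quantitative bounds there degenerate as $h\to1$.

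The main obstacle is that the fixed points of interest are exactly the sparse vectors, which lie on the coordinate hyperplanes $\inbraces{y_i=0}$ where $T$ is non-smooth: precisely where the stable-manifold machinery needs $C^1$ regularity, the map has a kink created by the absolute value, and the ``unstable direction'' at a spurious fixed point corresponds to \emph{activating} a currently-zero coordinate, a one-sided move whose first-order effect must be analysed with care. Showing that activating an off-support coordinate $j$ with $|a_j^\top\lambda|>1$ is genuinely expanding — rather than being suppressed by the large reweighting factor $1/|y_j|$ that a small $y_j$ induces — is the crux of the whole argument, and I expect it to be the reason the conjecture has resisted proof. A secondary obstacle is controlling orbits that approach the singular loci (where $A|Y|A^\top$ loses rank) without converging, and verifying that these, together with all backward images under $T$ of the non-smooth set, remain measure zero.
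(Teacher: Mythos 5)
First, a point of order: the paper does not prove this statement --- it appears as a conjecture and is explicitly described as an open problem, so there is no proof of record to compare yours against; your proposal must be judged as a research program, and as such it has a sound skeleton but genuine holes. The skeleton (the $\ell_1$-norm as a Lyapunov function, a LaSalle argument placing $\omega$-limit sets inside the fixed-point set, then center-stable-manifold machinery to show the basins of spurious fixed points are null, in the style of the avoiding-saddle-points literature) is the natural one, and your central worry --- that the reweighting factor $1/|y_j|$ might suppress the activation of an off-support coordinate --- can in fact be settled: since $q=|Y|A^\top(A|Y|A^\top)^{-1}b$, one has the exact identity $q_j=|y_j|\,a_j^\top\lambda$ with $\lambda\defeq(A|Y|A^\top)^{-1}b$, so a small off-support coordinate is multiplied per step by $|a_j^\top\lambda|$, which exceeds $1$ at a spurious fixed point; the growth is not suppressed, because the update is exactly linear in $|y_j|$ to leading order. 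Moreover, in the paper's hidden-variable coordinates of Section~\ref{sec:comparison}, with $w=|y|$, the update reads $w_j\mapsto w_j\,|a_j^\top\lambda(w)|$, which is $C^1$ near $w_j=0$ whenever $a_j^\top\lambda(w)\neq 0$, so part of the kink you fear can be smoothed away by lifting to $(y,w)$.

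The remaining gaps are real. (i) Your globalization step --- ``pulling back through the countably many iterates of the local diffeomorphism $T$'' --- assumes $T$ is a local diffeomorphism, which you assert (``a composition of invertible pieces'') but do not prove; for the null-pullback argument you need at least that $T$-preimages of null sets are null, e.g.\ that $\det D(T|_{\mathcal{A}})\neq 0$ off a null set, and nothing in the sketch rules out $DT$ degenerating on a set of positive measure, in which case a positive-measure set of initial points could be funneled into the stable set of a spurious fixed point in a single step. (ii) You quietly weaken the statement by discarding instances with a continuum of optimal solutions as ``non-generic in $b$''; the conjecture quantifies over starting points for an \emph{arbitrary fixed} $(A,b)$, and combinatorial instances such as the shortest-path example of Appendix~\ref{app:nonconvergence} routinely have non-unique optima and non-isolated spurious fixed points, where the plain center-stable manifold theorem does not apply and a uniform or \L{}ojasiewicz-type argument along fixed-point continua is needed. (iii) LaSalle requires continuity of $T$, which fails at points where deleting the null coordinates drops the rank of $A|Y|A^\top$; your Lyapunov argument controls orbits only on the continuity region, and orbits accumulating on the discontinuity locus are precisely the hard case --- indeed the paper's Appendix~\ref{app:nonconvergence} rests on the observation that the facets $\inbraces{y_j=0}$ are absorbing, so you must show that the union of all $T$-preimages of the bad facets is null, which circles back to (i). Finally, a small slip: equality in your majorization $x_i^2/|y_i|\geq 2|x_i|-|y_i|$ yields $|T(y)|=|y|$, which shows that $T(y)$ is a fixed point, not that $y$ is. None of this refutes the program, but (i) and (ii) are missing ideas rather than missing details, and with them unresolved the conjecture remains open --- as the paper itself asserts.
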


One of the main obstacles in proving global convergence for IRLS is its ``non-uniform'' behavior, depending on the support of the current point. Unfortunately, the issue of $y^{(k)}$ having zero-entries cannot be avoided. Note that this problem is not only a mathematical inconvenience. In fact, when dealing with instances where one or more entries of $y^{(k)}$ are close to zero, numerical issues are likely to appear. When solving a minimization problem of the kind~\eqref{irls_update}, tiny values of $y_i^{(k)}$ can be unpleasant to deal with and cause errors.

\subsection{Continuous Physarum dynamics for $\ell_1$-minimization}

The Physarum dynamics was originally introduced for an undirected graph $G=(V,E)$ as a continuous time dynamical system over $\R_{>0}^E$ (\cite{TKN07}). This model was proposed to explain the experimentally observed ability of Physarum to solve the shortest path problem. It was then extended to a more general flow problem: the transshipment problem (\cite{IJNT11, BMV12}). We propose an even more general treatment, in which there is no underlying graph, but just an abstract $\ell_1$-minimization problem over an affine subspace~\eqref{l1_min}. Throughout our discussion we  assume that $A$ has rank $m$ (thus in particular~\eqref{l1_min} is feasible). We start by giving the continuous dynamics  and subsequently turn it into a discrete one.

The continuous Physarum dynamics\footnote{More generally, we can define a dynamics solving the above problem with objective replaced by $\sum_{i=1}^n c_i |x_i|$ for any $c\in \R^n_{>0}$. The uniform cost case $c=(1,1,\ldots,1)^\top$ is however the most interesting one (as the non-uniform case reduces to it by scaling). } starts from an arbitrary positive point $w(0)\in \R_{>0}^n,$ its instantaneous velocity vector is given by
\begin{equation}\label{cont_physarum_bp}
\frac{d w(t)}{dt}=\abs{q(t)} - w(t),
\end{equation}
where $q(t)\in \R^n$ is computed as
\begin{equation}
q(t) = W(t) A^\top (AW(t) A^\top)^{-1} b.
\end{equation}
Here $W(t)$ denotes the diagonal matrix $\diag{w(t)}$. In the case of shortest path or the transshipment problem, the vector $q(t)$ corresponds to an electrical flow. It can be equivalently described as the minimizer of weighted $\ell_2$ norm $\inparen{\sum_{i=1}^n \frac{x_i^2}{w_i(t)}}^{1/2}$ over $\{x\in \R^n: Ax=b\}.$
Let us now state an important fact regarding~\eqref{cont_physarum_bp}.
\begin{theorem}\label{thm:existence}
For every initial condition $w(0)\in \R^n_{>0}$ there exists a global solution $w:[0,\infty) \to \R^n_{>0}$ satisfying~\eqref{cont_physarum_bp}.
\end{theorem}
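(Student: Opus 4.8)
The plan is to read \eqref{cont_physarum_bp} as an autonomous system $\dot w = F(w)$ on the open domain $\R^n_{>0}$, with $F(w)\defeq \abs{q(w)}-w$ and $q(w)=WA^\top(AWA^\top)^{-1}b$, and to prove global existence in three stages: local existence from a Lipschitz estimate, an a priori lower bound that keeps the trajectory strictly positive, and an a priori upper bound that rules out blow-up. First I would check that $F$ is well defined and locally Lipschitz on $\R^n_{>0}$: since $A$ has rank $m$ and $W=\diag{w}$ is positive definite for $w>0$, the matrix $AWA^\top$ is symmetric positive definite and hence invertible, so $q(w)$ is a rational (in particular $C^\infty$) function of $w$ on $\R^n_{>0}$; as $x\mapsto\abs{x}$ is globally $1$-Lipschitz, $w\mapsto\abs{q(w)}$ is locally Lipschitz, and so is $F$. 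The Picard--Lindel\"of theorem then yields, for each $w(0)\in\R^n_{>0}$, a unique maximal solution $w\colon[0,T_{\max})\to\R^n_{>0}$. By the standard escape characterization of maximal solutions, it suffices to show that on every finite interval $[0,T)\subseteq[0,T_{\max})$ the trajectory stays in a fixed compact subset of $\R^n_{>0}$, i.e.\ bounded above and bounded away from the boundary.

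For the lower bound, observe coordinatewise that $\dot w_i=\abs{q_i}-w_i\geq -w_i$, so Gr\"onwall's inequality gives $w_i(t)\geq w_i(0)e^{-t}$. Hence every coordinate stays strictly positive, and on $[0,T)$ it is bounded below by $\delta_i\defeq w_i(0)e^{-T}>0$; set $\delta\defeq(\delta_i)_i$ and $\Delta\defeq\diag{\delta}$. In particular the trajectory cannot reach $\partial\R^n_{>0}$ in finite time.

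The upper bound is the crux. I would control $\norm{w}_1=\sum_i w_i$ via the energy $\ene(w)\defeq\sum_i \frac{q_i(w)^2}{w_i}=b^\top(AWA^\top)^{-1}b$, which equals the optimal value $\min\{\sum_i x_i^2/w_i:Ax=b\}$ (the weighted $\ell_2$ interpretation of $q$). Since for each fixed feasible $x$ the quantity $\sum_i x_i^2/w_i$ is nonincreasing in every $w_i$, its minimum $\ene(w)$ is nonincreasing in $w$; together with $w\geq\delta$ on $[0,T)$ this yields $\ene(w(t))\leq\ene(\delta)=M_T$, where $M_T\defeq b^\top(A\Delta A^\top)^{-1}b<\infty$. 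By Cauchy--Schwarz, $\norm{q}_1=\sum_i\frac{\abs{q_i}}{\sqrt{w_i}}\sqrt{w_i}\leq\sqrt{\ene(w)}\,\sqrt{\norm{w}_1}\leq\sqrt{M_T}\,\sqrt{\norm{w}_1}$, so summing the dynamics gives $\frac{d}{dt}\norm{w}_1=\norm{q}_1-\norm{w}_1\leq\sqrt{M_T}\,\sqrt{\norm{w}_1}$, hence $\frac{d}{dt}\sqrt{\norm{w}_1}\leq\tfrac12\sqrt{M_T}$ and $\norm{w(t)}_1\leq\big(\sqrt{\norm{w(0)}_1}+\tfrac12\sqrt{M_T}\,T\big)^2$ on $[0,T)$. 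Thus $w$ is bounded above on $[0,T)$.

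Combining the two estimates, on every finite $[0,T)$ the trajectory lies in the compact set $\{w:w\geq\delta \text{ and } \norm{w}_1\leq U_T\}\subset\R^n_{>0}$, where $U_T$ is the finite upper bound just derived, and therefore neither blows up nor approaches the boundary; the escape characterization forces $T_{\max}=\infty$, producing the claimed global solution $w\colon[0,\infty)\to\R^n_{>0}$. I expect the a priori upper bound to be the main obstacle: the absolute value renders $F$ nonsmooth (though still Lipschitz), and the real work is in identifying the right Lyapunov-type quantity --- here $\norm{w}_1$ paired with the monotonicity of the energy $\ene$ --- that converts the dynamics into a closable differential inequality.
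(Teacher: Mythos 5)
Your proof is correct and follows exactly the route the paper sketches: local existence via local Lipschitzness of $w\mapsto \abs{q(w)}-w$ on $\R^n_{>0}$, plus a priori bounds showing the trajectory neither reaches the boundary (the Gr\"onwall bound $w_i(t)\geq w_i(0)e^{-t}$) nor blows up in finite time. The paper omits these details and refers to~\cite{SVLP15}; your energy-monotonicity and Cauchy--Schwarz argument for the upper bound is a sound way to fill them in, so there is nothing to fix.
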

We omit the proof. Let us only mention that the update rule is defined by a locally Lipschitz continuous function, hence the solution to~\eqref{cont_physarum_bp} exists locally. To prove global existence, one needs to show in addition that no solution curve approaches the boundary of $\R^n_{>0}$ in finite time. We refer the reader to~\cite{SVLP15} where a complete proof of existence for a related dynamics is presented. (Though, the case of~\eqref{cont_physarum_bp} is much simpler.)

\subsection{Discrete Physarum dynamics}
We apply Euler's method to discretize the Physarum dynamics from the previous subsection. Pick a small positive step size $h\in (0,1)$ and observe that:
$$w(t+h)-w(t) \approx h \dot{w}(t)  = h(|q(t)|-w(t)).$$
Hence,
$$w(t+h) \approx h|q(t)| +(1-h) w(t).$$
This motivates the following discrete process: pick any $w^{(0)}\in \R_{>0}^n$ and iterate for $k=0,1,\ldots$:
\begin{equation}\label{physarum_discr}
 w^{(k+1)} = h|q^{(k)}| + (1-h) w^{(k)},
 \end{equation}
where as previously $q^{(k)}$ is the result of $\ell_2$-minimization performed with respect to the weights $\inparen{w^{(k)}}^{-1}$. It is given explicitly by the formula  $q^{(k)}=W^{(k)} A^\top (AW^{(k)} A^\top)^{-1}b$.

\section{IRLS vs Physarum}\label{sec:comparison}
In this section we present a proof of  Theorem \ref{thm:main1}. 
When comparing IRLS with the Physarum dynamics one can already see similarities between these two algorithms: both of them are iterative methods which use weighted $\ell_2$-minimization to perform the update. However, apart from this observation no formal connection is apparent. Physarum defines a sequence of strictly positive vectors whose $\ell_1$-norm converges to the optimal $\ell_1$-norm; in particular the iterates are never feasible. The iterates of the IRLS algorithm on the other hand, starting from $k=1$, lie in the feasible region. 

It turns out that the key to understand how these algorithms are  related to each other is by considering them as algorithms working in a larger space: $\R^n \times \R_{> 0}^n$.
  We  show in the subsequent subsections that both algorithms can be seen as maintaining a pair $(y,w)\in \R^n \times \R_{> 0}^n$ such that $y$ satisfies $Ay=b$ and $w$ is the vector of weights guiding the $\ell_2$-minimization. Interestingly, in the original presentation of the Physarum dynamics only the $w$ variable is apparent. In contrast, IRLS keeps track of just the  $y$ variables. This viewpoint allows us to explains how these two algorithms follow essentially the same update rule. 

\subsection{Physarum dynamics and hidden variables}
Recall that Physarum dynamics was defined as starting from some point $w^{(0)}\in \R^n_{>0}$ and evolving according to the rule:
$$w^{(k+1)} = (1-h)w^{(k)} + h|q^{(k)}|$$
with $h\in (0,1)$. Note that $w^{(k)}$ does not quite converge to the optimal solution (it is always positive). The only guarantee we can prove is that $\norm{w^{(k)}}_1$ tends to $\norm{x^\star}_1$ (with $x^\star$ being any optimal solution to~\eqref{l1_min}). Can we recover $x^\star$ from this process? 

Suppose that the starting point $w^{(0)}$ is not arbitrary, but chosen in a specific way. Let $y\in \R^n$ be any solution to $Ay=b$, for instance the least squares solution. For  $w^{(0)}$ we choose any vector $w\in \R_{>0}^n$ which satisfies $|y|\leq w$ entry-wise. Hence, our starting point $w^{(0)}$ belongs to the set:
$$K \defeq \inbraces{w\in \R^n_{> 0}: \exists y\in \R^n \; \; \mathrm{s.t.} \; \;  \inparen{Ay=b \; \;  \mathrm{ and } \; \;  \abs{y}\leq w}}.$$
We now observe a surprising fact.
\begin{fact}
If $\{w^{(k)}\}_{k\in \N}$ is a sequence of points produced by the Physarum dynamics and $w^{(0)}\in K,$ then $w^{(k)}\in K$ for every $k\in \N$.
\end{fact}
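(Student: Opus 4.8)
The plan is to prove the statement by induction on $k$, carrying along not merely the membership $w^{(k)}\in K$ but an explicit witness for it. Concretely, I would maintain an auxiliary sequence $\{y^{(k)}\}$ (this is the sequence alluded to in the discussion above) satisfying, for every $k$, both $Ay^{(k)}=b$ and $\abs{y^{(k)}}\leq w^{(k)}$; the existence of such a $y^{(k)}$ is exactly the assertion $w^{(k)}\in K$.

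The base case is immediate: $w^{(0)}\in K$ by hypothesis, so some witness $y^{(0)}$ with $Ay^{(0)}=b$ and $\abs{y^{(0)}}\leq w^{(0)}$ exists. For the inductive step, given a witness $y^{(k)}$ for $w^{(k)}$, the natural candidate witness for $w^{(k+1)}$ is obtained by applying the same damped update to $y$ that the dynamics applies to $w$, using the \emph{feasible} point $q^{(k)}$ in place of $\abs{q^{(k)}}$:
\[
y^{(k+1)}\defeq (1-h)y^{(k)} + h\,q^{(k)}.
\]

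Two checks then finish the induction. First, feasibility: since $q^{(k)}$ is the weighted $\ell_2$-minimizer it satisfies $Aq^{(k)}=b$, so by linearity $Ay^{(k+1)}=(1-h)Ay^{(k)}+h\,Aq^{(k)}=(1-h)b+hb=b$. Second, the entrywise bound: because $h\in(0,1)$ the coefficients $1-h$ and $h$ are positive, so the triangle inequality applied coordinatewise gives $\abs{y^{(k+1)}}\leq (1-h)\abs{y^{(k)}} + h\abs{q^{(k)}} \leq (1-h)w^{(k)}+h\abs{q^{(k)}} = w^{(k+1)}$, where the middle inequality uses the inductive hypothesis $\abs{y^{(k)}}\leq w^{(k)}$. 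Finally one notes $w^{(k+1)}>0$, which holds since $w^{(k)}>0$, $1-h>0$ and $\abs{q^{(k)}}\geq 0$; hence $y^{(k+1)}$ is a valid witness and $w^{(k+1)}\in K$.

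I do not expect a genuine obstacle here: the only real content is recognizing that the correct invariant to propagate is an explicit feasible witness $y^{(k)}$ updated by the same convex-combination rule, after which feasibility reduces to linearity of $A$ and the norm bound reduces to the triangle inequality. The single point demanding a little care is the interaction of a zero coordinate of $q^{(k)}$ with the witness, but since $w^{(k)}$ stays strictly positive and the triangle inequality holds unconditionally, this causes no difficulty.
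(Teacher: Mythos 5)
Your proposal is correct and follows essentially the same route as the paper: induction with the explicit witness update $y^{(k+1)}=(1-h)y^{(k)}+h\,q^{(k)}$, with feasibility from linearity of $A$ and the entrywise bound from the triangle inequality. Indeed, the paper itself remarks immediately after its proof that this argument ``actually shows more,'' namely the simultaneous evolution of the witness sequence $\{y^{(k)}\}$ that you made explicit.
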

\begin{proof}
The proof goes by induction. For $k=0$ the claim holds. Let $k\geq 0$ and consider $w^{(k+1)}$. We have
$$w^{(k+1)} = (1-h)w^{(k)}+h|q^{(k)}|.$$
Hence, if $y$ certifies that $w^{(k)}\in K$ ($Ay=b$ and $\abs{y}\leq w^{(k)}$) then,
$$|(1-h)y_i + h q_i^{(k)}| \leq (1-h)|y_i| + h|q_i^{(k)}| \leq (1-h) w_i^{(k)}+ h |q_i^{(k)}| = w_i^{(k+1)}.$$
In other words, $|(1-h)y+h q^{(k)}|\leq w^{(k+1)}$. This implies that $w^{(k+1)}\in K$ since indeed $$A\inparen{(1-h)y+h q^{(k)}}=b.$$
\end{proof}
\noindent
The above proof actually shows  more. Let $y^{(0)}\in \R^n$ be any point satisfying $Ay^{(0)}=b$ and $w^{(0)}\in \R^{n}_{>0}$ satisfy $\abs{y^{(0)}}\leq w^{(0)}$. If we evolve the pair $(y^{(k)}, w^{(k)})$ according to the rules:
\begin{align*}
w^{(k+1)} &= (1-h)w^{(k)}+h \abs{q^{(k)}},\\
y^{(k+1)} &= (1-h)y^{(k)}+h q^{(k)},
\end{align*}
then $Ay^{(k)}=b$ and $|y^{(k)}|\leq w^{(k)}$ for every $k\in \N$. This implies in particular that 
$$\forall k\in \N \qquad \norm{x^\star}_1 \leq \norm{y^{(k)}}_1 \leq \norm{w^{(k)}}_1.$$
Thus proving convergence of Physarum dynamics is equivalent to showing an appropriate upper bound on $\norm{w^{(k)}}_1$.
The above interpretation of Physarum, as simultaneously evolving two sets of variables is key to understand its connection to IRLS.

\subsection{IRLS as alternate minimization}
We now present IRLS from a (known) alternate minimization viewpoint;  see~\cite{Beck15,Daubechies10}.
Consider the following function $J:\R^{n} \times \R^{n}_{> 0} \to \R$:
$$J(y,w) = \sum_{i=1}^n \frac{y_i^2}{w_i}+\sum_{i=1}^n w_i.$$
$J$ is not well defined when $w_i=0$ for some $i$, but for simplicity let us now ignore this issue.\footnote{The correct way to define $J(y,w)$ in presence of zero entries is the following: whenever $w_i=y_i=0$ we set $\frac{y_i^2}{w_i}=0$ as and whenever $w_i=0$ and $y_i\neq 0$ we define $\frac{y_i^2}{w_i} = +\infty$} It turns out that IRLS can be seen as an alternate minimization method applied to the function $J$. Let us first remark that $J$ is not a convex function. However, when either $y$ or $w$ is fixed, then $J$ is convex as a function of the remaining variables. 

\noindent Consider the following alternate minimization algorithm for $J(y,w)$.
\begin{enumerate}
\item Start with $w^{(0)}=(1,1,\ldots,1)^\top$.
\item For $k=0,1,2,\ldots$:
\begin{itemize}
\item let $y^{(k+1)}$ be the $y$ which minimizes $J(y,w^{(k)})$ over  $y\in \R^n, \; Ay=b$,
\item let $w^{(k+1)}$ be the $w$ which minimizes $J(y^{(k+1)},w)$ over $w\in \R^n_{> 0}$.
\end{itemize}
\end{enumerate}

\noindent The above method tries to minimize the function $J(y,w)$ by alternating between minimization over $y$ with $w$ fixed and minimization over $w$ with $y$ fixed. In general such a scheme is not guaranteed to converge to a global optimum (especially when $J$ is non-convex). 
We now describe what these partial minimization steps correspond to.

\begin{fact}\label{fact_step1}
Suppose that $w\in \R^n_{>0}$ is fixed, then:
$$\argmin_{y} \inbraces{J(y,w): y\in \R^n, Ay=b} =  \argmin_y \inbraces{\sum_{i=1}^n \frac{y_i^2}{w_i}: y\in \R^n, Ay=b}$$
\end{fact}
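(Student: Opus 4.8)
The plan is to observe that, with $w$ fixed, the function $J(y,w)$ decomposes into a piece that depends on $y$ and a piece that is constant in $y$. Writing
$$J(y,w) = \sum_{i=1}^n \frac{y_i^2}{w_i} + \sum_{i=1}^n w_i,$$
the second sum $\sum_{i=1}^n w_i$ does not involve $y$ at all, so over the feasible set $\{y\in\R^n : Ay=b\}$ it is simply an additive constant $c(w) \defeq \sum_{i=1}^n w_i$. Thus, as a function of $y$ alone (with $w$ held fixed), we have $J(y,w) = g(y) + c(w)$ where $g(y) \defeq \sum_{i=1}^n y_i^2/w_i$.

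The only fact I would then invoke is the elementary observation that shifting an objective by a constant does not change its set of minimizers: for any function $g$ and any constant $c$, $\argmin_y\{g(y)+c : Ay=b\} = \argmin_y\{g(y) : Ay=b\}$, since $g(y)+c \le g(y')+c \iff g(y)\le g(y')$. Applying this with $c = c(w)$ gives exactly the claimed equality of argmin sets. (Since $w\in\R^n_{>0}$, the quadratic $g$ is strictly convex on the affine feasible set and the minimizer is in fact unique, as already noted for weighted $\ell_2$-minimization in the Preliminaries, so both sides are in fact singletons — but this is not even needed for the statement, which only asserts equality of the argmin sets.)

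There is essentially no obstacle here: the content of the fact is purely that the additive weight term $\sum_i w_i$ is inert under minimization over $y$. The one point worth a brief remark, to keep the argument honest in light of the footnote on zero entries, is that the decomposition and the constant-shift argument are carried out under the standing assumption $w\in\R^n_{>0}$, so that every term $y_i^2/w_i$ is finite and well defined and no $+\infty$ bookkeeping is required; the fact is stated precisely in this regime, so no extra care is needed.
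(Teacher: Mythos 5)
Your proof is correct and matches the paper's own argument, which likewise observes that the term $\sum_{i=1}^n w_i$ is independent of $y$ and hence does not affect the minimization over $\{y \in \R^n : Ay = b\}$. Your added remarks on uniqueness of the minimizer and on the role of the assumption $w \in \R^n_{>0}$ are accurate but not needed for the statement.
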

\noindent
The proof is straightforward; the only point worth noting is  that the second term in $J(y,w)$ does not depend on $y$ and hence does not need to be taken into account. We now analyze the second step.

\begin{fact}\label{fact_step2}
Suppose that $y\in \R^n$ is fixed and $y_i\neq 0$ for all $i\in [n]$ then:
$$\argmin \inbraces{J(y,w): w\in \R^n_{> 0}} =  \abs{y}.$$
\end{fact}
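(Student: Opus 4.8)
The plan is to fix $y \in \R^n$ with all $y_i \neq 0$ and minimize the function
$$g(w) \defeq J(y,w) = \sum_{i=1}^n \frac{y_i^2}{w_i} + \sum_{i=1}^n w_i$$
over $w \in \R^n_{>0}$. The key observation is that $g$ separates as a sum of one-variable functions $g(w) = \sum_{i=1}^n g_i(w_i)$, where $g_i(w_i) = \frac{y_i^2}{w_i} + w_i$. Since the variables decouple, I would minimize each term independently over $w_i > 0$, and then assemble the componentwise minimizers into the claimed vector $\abs{y}$.

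First I would analyze a single term $g_i(t) = \frac{c}{t} + t$ on $t > 0$, where $c \defeq y_i^2 > 0$ (strict positivity of $c$ is exactly where the hypothesis $y_i \neq 0$ enters). Differentiating gives $g_i'(t) = -\frac{c}{t^2} + 1$, which vanishes precisely at $t = \sqrt{c} = \abs{y_i}$. To confirm this is a genuine minimum rather than a saddle or maximum, I would note that $g_i''(t) = \frac{2c}{t^3} > 0$ for all $t > 0$, so $g_i$ is strictly convex on the positive reals and the unique critical point $t = \abs{y_i}$ is its unique global minimizer. (Alternatively, the AM--GM inequality $\frac{c}{t} + t \geq 2\sqrt{c}$ with equality iff $\frac{c}{t} = t$, i.e. $t = \sqrt{c} = \abs{y_i}$, gives the same conclusion with no calculus.) I would then record the minimum value $g_i(\abs{y_i}) = 2\abs{y_i}$.

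Having minimized each coordinate, I would combine: since $g(w) = \sum_i g_i(w_i)$ and each summand depends only on its own coordinate $w_i$, any $w > 0$ satisfies $g(w) \geq \sum_i g_i(\abs{y_i})$ with equality if and only if $w_i = \abs{y_i}$ for every $i$. Thus the unique minimizer over $\R^n_{>0}$ is $w = \abs{y}$, and the minimum value equals $\sum_i 2\abs{y_i} = 2\norm{y}_1$, which proves the claim.

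There is no serious obstacle here; the statement is a routine separable one-dimensional optimization. The only point requiring a moment of care is the role of the hypothesis $y_i \neq 0$: if some $y_i = 0$ then the corresponding term is just $w_i$, whose infimum over $w_i > 0$ is $0$ and is \emph{not attained} in the open set $\R^n_{>0}$, so the minimizer would fail to exist (this is precisely the boundary issue the paper flags for zero entries, and is why the hypothesis is stated). I would mention this briefly to motivate the assumption and tie it to the earlier remarks, but the core argument is the strict-convexity / AM--GM analysis of $g_i$ above.
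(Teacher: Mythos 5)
Your proof is correct and follows essentially the same route as the paper: exploit separability of $J(y,\cdot)$ and minimize each one-dimensional term $\frac{y_i^2}{w_i}+w_i$ at $w_i=|y_i|$, which the paper dispatches as ``a simple calculation.'' Your extra details (strict convexity/AM--GM, the minimum value $2\norm{y}_1$, and the non-attainment of the infimum when some $y_i=0$) are all accurate elaborations of the same argument.
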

\noindent
In the above we make a simplifying assumption that no entry of $y$ is zero. This is not crucial, but to drop this assumption, a more rigorous treatment is necessary. It can be done, at a cost of making the notation less transparent.

\begin{proof}
We would like to minimize
$$J(y,w)= \sum_{i=1}^n \frac{y_i^2}{w_i}+\sum_{i=1}^n w_i$$
for a fixed $y$. Note that the above function is separable, hence it suffices to minimize
$$\frac{y_i^2}{w_i} + w_i$$
separately for every $i$. By a simple calculation one can find that the above expression is minimized when
$$w_i = |y_i|.$$
\end{proof}
\noindent
Note now that Facts~\ref{fact_step1} and \ref{fact_step2} together imply that the sequence $y^{(1)}, y^{(2)}, \ldots$ resulting from alternate minimization is the same as that produced by IRLS.
As a byproduct, we also obtain that $\norm{y^{(k)}}_1$ is non-increasing with $k$, because:
\begin{align*}
J\inparen{y^{(k)},w^{(k)}} &= \sum_{i=1}^n \frac{\abs{y_i^{(k)}}^2}{w_i^{(k)}}+\sum_{i=1}^n w_i^{(k)}\\
& = \sum_{i=1}^n \frac{\abs{y_i^{(k)}}^2}{\abs{y_i^{(k)}}} + \sum_{i=1}^n  \abs{y_i^{(k)}} \\
&= 2\norm{y^{(k)}}_1.
\end{align*}
Of course $J\inparen{y^{(k)},w^{(k)}}$ is non-increasing for $k\geq 1$, hence $\norm{y^{(k)}}_1$ is non-increasing as well.
\subsection{Comparing IRLS with Physarum}
In this subsection we conclude our previous considerations by giving a unifying viewpoint on Physarum and IRLS. In fact both of them can be seen as algorithms working in the $2n$-dimensional space $\Gamma=\R^n \times \R^n_{> 0}$. Let us state both algorithms in a similar form.

\noindent\begin{minipage}[t]{8.08cm}
\vspace{0pt}  
\begin{algorithm}[H]

 \KwData{$A\in \R^{m\times n}, b\in \R^m$}
 $w^{(0)}=(1,1,...,1)^\top \in \R^n$\;
 \For{$k=0,1,2,...$}{
  $q=\argmin \sum_{i=1}^n \frac{x_i^2}{w_i^{(k)}} \quad \mathrm{ s.t. } \; \; Ax=b$\;
  $y^{(k+1)} = q$\;
  $w^{(k+1)}=|q|$\;
 }
 \caption{IRLS} 
\end{algorithm}
\end{minipage}%
\hspace{0.3cm}
\begin{minipage}[t]{8.08cm}
\vspace{0pt}  
\begin{algorithm}[H]
 \KwData{$A\in \R^{m\times n}, b\in \R^m$}
 $w^{(0)}=(1,1,...,1)^\top \in \R^n$, $h\in (0,1)$\;
 \For{$k=0,1,2,...$}{
  $q=\argmin \sum_{i=1}^n \frac{x_i^2}{w_i^{(k)}} \quad \mathrm{ s.t. }\; \; Ax=b$\;
  $y^{(k+1)} = (1-h) y^{(k)} +hq$\;
  $w^{(k+1)}=(1-h) w^{(k)} +h|q|$\;
 }
 \caption{Physarum} 
\end{algorithm}
\end{minipage}
\vspace{2mm}

\noindent
The above comparison yields a clear connection between IRLS and Physarum. Let us define a vector field $F:\Gamma \to \R^n \times \R^n$ by the following formula:
\begin{align}
\begin{split}
F(y,w) &\defeq (q-y,|q|-w),\\
 q& \defeq \argmin_{x\in \R^n} \sum_{i=1}^n \frac{x_i^2}{w_i} \qquad \mathrm{s.t. } \; \;  Ax=b.
\end{split}
\end{align}
The IRLS algorithm given a point $(x,w)\in \R^n \times \R^n_{> 0}$ simply moves along the vector $F(x,w)$ to the new point $(x,w)+F(x,w)$, while Physarum moves to a point on the interval between $(x,w)$ and $(x,w)+F(x,w)$. For this reason Physarum can be seen as a damped variant of IRLS. 

Let us now define two interesting subsets of $\Gamma$ (see Figure~\ref{fig:gamma} for a one-dimensional example)
\begin{align*}
P&\defeq \inbraces{(y,w)\in \Gamma: Ay=b, \abs{y}\leq w},\\
\bar{P} & \defeq \inbraces{(y,w)\in \Gamma: Ay=b, \abs{y}=w}.
\end{align*}
\begin{figure}[!ht]
  
  \centering
    \includegraphics[width=0.7\textwidth]{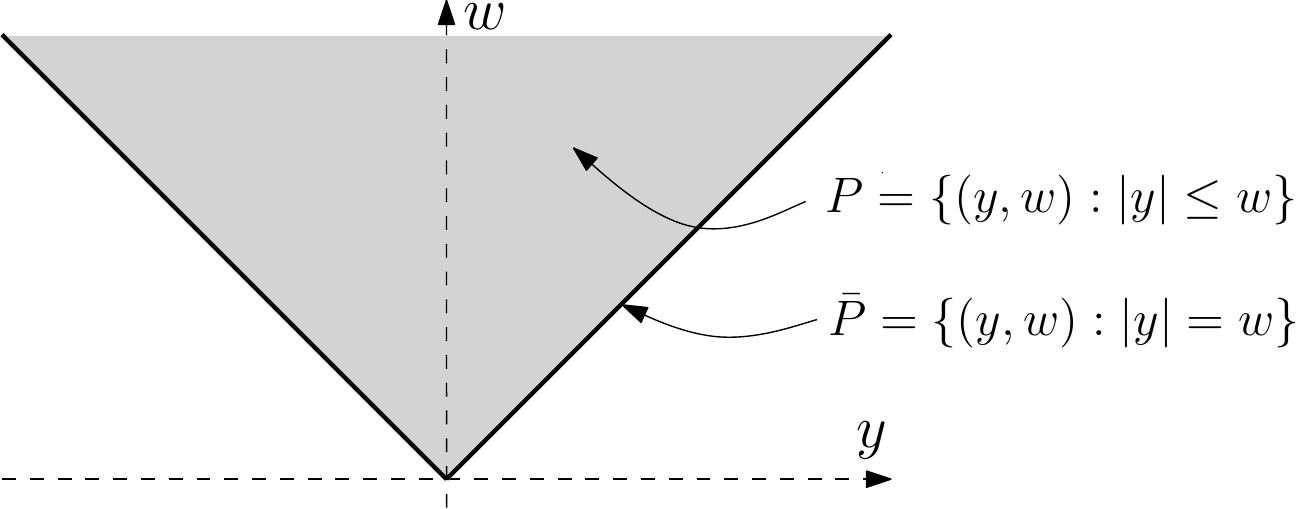}
    \caption{Illustration of $\Gamma, P$ and $\bar{P}$ for $n=1$.}
    \label{fig:gamma}
\end{figure}
IRLS can be seen as a discrete dynamical system defined over $\bar{P}$, while Physarum initialized at a point $w^{(0)}\in P$ stays in $P$, for any choice of $h\in (0,1)$.\footnote{Physarum initialized at a point outside of $P$ converges to $P$.} Interestingly $\bar{P}$ is a non-convex set, which is the boundary of $P$ (in contrast $P$ is convex).

In the next section we prove that Physarum never faces the issue of $w^{(k)}_i$ being zero for some $i$, indeed $w^{(k)}>0$ for every $k$, which follows from the fact that $h<1$. In contrast, Physarum with $h=1$ is equivalent to IRLS, where this happens frequently.

\section{Convergence and Complexity of Physarum Dynamics}\label{sec:convergence}
In this section we study convergence of Physarum dynamics. The analysis is based on ideas developed in~\cite{BBDKM13,SV15,SVLP15}. Specifically, we prove the following theorem, whose informal version appeared as Theorem \ref{thm:main2}. Let  $\alpha \defeq \max \inbraces{|\det(A')|:A' \; \mathrm{is \;  a \;  square  \; submatrix  \; of \; } \; A}$.

\begin{theorem}\label{thm:convergence}
Suppose $w^{(0)}$ was chosen to satisfy $\abs{y^{(0)}}\leq w^{(0)}$ for some $y^{(0)}\in \R^n$ such that $Ay^{(0)}=b$. Furthermore assume $w_i^{(0)}\geq 1$ for every $i\in [n]$ and $\norm{w^{(0)}}_1 \leq M \norm{x^\star}_1$ for some $M\in \R$. Let $\eps\in (0,\nfrac{1}{2})$ and $h\leq \frac{\eps}{40n^2\alpha^2}$. Then after $k=O\inparen{ \frac{\ln M + \ln \norm{x^\star}_1}{h\eps^2}}$ steps $\norm{w^{(k)}}_1 \leq (1+\eps) \norm{x^\star}_1$ and one can easily recover a vector $y^{(k)}$ such that $Ay^{(k)}=b$ and $\norm{y^{(k)}}_1 \leq \norm{w^{(k)}}_1$.
\end{theorem}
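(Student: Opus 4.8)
The plan is to track a single potential function that measures progress of the Physarum iterates $w^{(k)}$ toward the optimal $\ell_1$ value. The natural choice, following the prior flow analyses, is to compare $\norm{w^{(k)}}_1$ against $\norm{x^\star}_1$ and show the gap shrinks geometrically per step (up to the $\eps$-additive floor). Concretely, I would study the one-step change $\norm{w^{(k+1)}}_1 - \norm{w^{(k)}}_1$. Since $w^{(k+1)} = (1-h)w^{(k)} + h|q^{(k)}|$ and everything is nonnegative, we get exactly $\norm{w^{(k+1)}}_1 = (1-h)\norm{w^{(k)}}_1 + h\norm{q^{(k)}}_1$, so the increment is $h(\norm{q^{(k)}}_1 - \norm{w^{(k)}}_1)$. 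The whole analysis therefore reduces to understanding how the $\ell_1$-norm of the weighted least-squares solution $q^{(k)}$ compares to $\norm{w^{(k)}}_1$ and to $\norm{x^\star}_1$.

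The key structural lemma I expect to need is an \emph{energy}/Cauchy--Schwarz bound. Writing the weighted energy $\ene(w) \defeq \sum_i \frac{(q_i)^2}{w_i}$ where $q$ minimizes this energy subject to $Aq=b$, one has by Cauchy--Schwarz that $\norm{q}_1 = \sum_i |q_i| = \sum_i \frac{|q_i|}{\sqrt{w_i}}\sqrt{w_i} \leq \sqrt{\ene(w)}\sqrt{\norm{w}_1}$. Since $x^\star$ is feasible, the energy-minimality of $q$ gives $\ene(w) \leq \sum_i \frac{(x^\star_i)^2}{w_i}$, and I would then bound this against $\norm{x^\star}_1$ using the lower bound $w_i^{(0)} \geq 1$ (which, because $w^{(k)}$ stays bounded below by a geometric decay of the initial mass, keeps the weights from collapsing too fast). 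Combining these yields the two inequalities I want: first, that $\norm{q^{(k)}}_1 \leq \norm{w^{(k)}}_1$ whenever $\norm{w^{(k)}}_1$ is still far above optimal (giving monotone decrease), and second a quantitative \emph{strict} decrease of the form $\norm{w^{(k+1)}}_1 \leq \norm{w^{(k)}}_1 - h\,\Omega\!\left(\frac{(\norm{w^{(k)}}_1 - \norm{x^\star}_1)^2}{\norm{w^{(k)}}_1}\right)$ once we are in the regime where the gap exceeds $\eps\norm{x^\star}_1$. The role of $\alpha$ (the maximal subdeterminant) and the step-size constraint $h \leq \frac{\eps}{40n^2\alpha^2}$ enters precisely in controlling the second-order Euler discretization error and in bounding how much the weights can deviate, via Cramer's rule, from the quantities $|x^\star_i|$.

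From a per-step multiplicative decrease of the excess $\norm{w^{(k)}}_1 - \norm{x^\star}_1$ by a factor $(1 - \Omega(h\eps))$ (valid while the excess is at least $\eps\norm{x^\star}_1$), the iteration count follows by a standard telescoping/logarithmic argument: reducing the initial excess, which is at most $(M-1)\norm{x^\star}_1$, down to $\eps\norm{x^\star}_1$ takes $O\!\left(\frac{\ln M + \ln\norm{x^\star}_1}{h\eps}\right)$ steps, and the extra $\eps^{-1}$ in the stated bound absorbs the weaker quadratic-gap decrease near the floor. Finally, the recovery of $y^{(k)}$ is immediate from the hidden-variable observation already established: evolving $y^{(k+1)} = (1-h)y^{(k)} + hq^{(k)}$ in parallel maintains both $Ay^{(k)}=b$ and $|y^{(k)}| \leq w^{(k)}$, so $\norm{y^{(k)}}_1 \leq \norm{w^{(k)}}_1 \leq (1+\eps)\norm{x^\star}_1$.

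\textbf{The main obstacle} I anticipate is making the strict-decrease estimate quantitatively tight enough to justify the exact dependence on $n$, $\alpha$, and $\eps$ in the step-size and iteration bounds. The clean Cauchy--Schwarz argument gives monotonicity easily, but converting it into a \emph{rate} requires lower-bounding the energy gap $\ene(w^{(k)}) - (\text{optimal})$ by something proportional to $(\norm{w^{(k)}}_1 - \norm{x^\star}_1)$, and it is here that one must carefully use that the weights are sandwiched between polynomially-bounded quantities — controlled through the subdeterminant $\alpha$ — to avoid the degeneracy where some $w_i^{(k)}$ becomes tiny and the energy blows up. Equivalently, the damping $h<1$ is what keeps $w^{(k)}>0$, but the \emph{uniform} lower bound needed for the rate must be argued separately, and I expect that to be the technically delicate heart of the proof.
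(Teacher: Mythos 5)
Your proposal correctly reproduces the first half of the paper's analysis (its Lemma~\ref{lemma:norm_drop}): the identity $\norm{w^{(k+1)}}_1=(1-h)\norm{w^{(k)}}_1+h\norm{q^{(k)}}_1$, the Cauchy--Schwarz bound $\norm{q^{(k)}}_1\le E(k)^{1/2}\norm{w^{(k)}}_1^{1/2}$ with $E(k)=\sum_i (q_i^{(k)})^2/w_i^{(k)}$, and $E(k)\le \norm{w^{(k)}}_1$ via the hidden feasible $y^{(k)}$. But the step you flag as ``the technically delicate heart'' is not merely delicate --- it is false as stated, and no uniform lower bound on the weights can rescue it. The claimed per-step decrease $\norm{w^{(k+1)}}_1\le\norm{w^{(k)}}_1-h\,\Omega\inparen{(\norm{w^{(k)}}_1-\norm{x^\star}_1)^2/\norm{w^{(k)}}_1}$ fails because the dynamics has \emph{suboptimal near-fixed points}: whenever $w\approx|y|$ for \emph{any} feasible $y$ (not just an optimizer), one gets $E(k)\approx\norm{w}_1$ and hence $\norm{q}_1\approx\norm{w}_1$, so the one-step drop of $\norm{w}_1$ is arbitrarily small while the gap $\norm{w}_1-\norm{x^\star}_1$ stays bounded away from zero; the graph instance of Appendix~\ref{app:nonconvergence} makes this concrete (take $w$ near the indicator of a suboptimal $s$--$t$ path). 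Relatedly, your bound $\ene(w^{(k)})\le\sum_i (x^\star_i)^2/w_i^{(k)}$ controlled through $w_i^{(k)}\ge(1-h)^k$ degrades exponentially over the required $k\sim 1/(h\eps^2)$ steps (the factor $(1-h)^{-k}$ is $e^{\Theta(1/\eps^2)}$), and in any case smallness of that quantity needs the weights to be large \emph{on the support of $x^\star$}, which is exactly what fails near wrong fixed points; meanwhile the weights off the optimal support must tend to zero, so no uniform sandwich of the weights exists.

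The missing idea is a second potential. The paper couples the norm argument with the barrier $\barr(k)=\sum_i x^\star_i\ln w_i^{(k)}$ and proves (Lemma~\ref{lemma:barr}) that $\barr(k+1)\ge\barr(k)+h\inparen{E(k)-(1+\frac{\eps}{10})\norm{x^\star}_1}$, using for the linear term the duality-type identity $\sum_i x^\star_i|q_i^{(k)}|/w_i^{(k)}\ge (x^\star)^\top A^\top L^{-1}b=b^\top L^{-1}b=E(k)$ (Fact~\ref{fact:energy}), and for the quadratic term the uniform bound $|q_i^{(k)}|/w_i^{(k)}\le n\alpha$ (Corollary~\ref{cor:bounded_potentials}, resting on Lemma~\ref{lemma:technical}); this is where $\alpha$ and the restriction $h\le\frac{\eps}{40n^2\alpha^2}$ actually enter --- to tame the second-order term of $\ln(1+x)$ --- not in a Cramer's-rule sandwich of the weights. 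The count is then a two-phase dichotomy: if $\norm{w^{(k)}}_1>(1+\frac{\eps}{3})E(k)$ the norm drops by a factor $(1-\frac{h\eps}{8})$, which can happen only $O(\frac{\ln M}{h\eps})$ times; otherwise, while $\norm{w^{(k)}}_1>(1+\eps)\norm{x^\star}_1$, one gets $E(k)\ge(1+\frac{\eps}{2})\norm{x^\star}_1$, so $\barr$ gains $\frac{h\eps}{3}\norm{x^\star}_1$ per step, and since $\barr(0)\ge 0$ (this is where $w_i^{(0)}\ge1$ is used), $\barr(k)\le\norm{x^\star}_1(\ln M+\ln\norm{x^\star}_1)$, and each drop is at most $2h\norm{x^\star}_1$, such steps number $O\inparen{\frac{\ln M+\ln\norm{x^\star}_1}{h\eps^2}}$. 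Your recovery of $y^{(k)}$ via the hidden-variable coupling is correct and matches the paper.
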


\noindent
Few comments are in order. The assumptions about the starting point $w^{(0)}$, we made in the statement, are not necessary for convergence. However, they greatly simplify the proofs and make it easy to recover a close to optimal feasible solution to~\eqref{l1_min_intro}. The choice of the step size $h$  follows directly from our analysis and is not likely to be optimal. Experiments suggest that the claimed iteration bound should hold even for $h$ being a small constant (not depending on the data).

\medskip
\parag{Assumptions, notation and simple facts.}
Motivated by the observation about hidden variables made in Section~\ref{sec:comparison}, we  assume that the starting point $w^{(0)}$ is chosen in such a way that $w^{(0)}>0$ and $\abs{y^{(0)}}\leq w^{(0)}$ for some $y^{(0)}\in \R^n$ such that $Ay^{(0)}=0$. Recall that in that case, for every $k$ we are guaranteed existence of a feasible $y^{(k)}$ with $\abs{y^{(k)}}\leq w^{(k)}$. Moreover, these $y^{(k)}$ are easy to find. One particular choice of $y^{(0)}$ and $w^{(0)}$ could be the least squares solution to $Ax=b$ and $w_i^{(0)}=\abs{y_i^{(0)}}+1$ respectively.

Let us now verify  that $w^{(k)}>0$ at all steps and hence that the Physarum dynamics is well defined.
\begin{lemma}
For every $k$, $w^{(k)}\in \R_{>0}^n$.
\end{lemma}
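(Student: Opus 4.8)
The plan is to prove by induction on $k$ that $w^{(k)} \in \R_{>0}^n$, where the base case is immediate from the assumption $w^{(0)} > 0$. For the inductive step, I would assume $w^{(k)} > 0$ and show that each coordinate of $w^{(k+1)}$ is strictly positive. Recall the update rule
\begin{equation*}
w^{(k+1)} = (1-h) w^{(k)} + h |q^{(k)}|.
\end{equation*}
The key observation is that this expresses each $w_i^{(k+1)}$ as a nonnegative combination of two nonnegative quantities, so I just need one of them to be strictly positive in each coordinate.

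The crucial point is that $h \in (0,1)$, so $1 - h > 0$ strictly. By the inductive hypothesis $w^{(k)} > 0$, hence $(1-h) w_i^{(k)} > 0$ for every $i \in [n]$. Since $|q_i^{(k)}| \geq 0$ always (it is an absolute value) and $h > 0$, the second term $h |q_i^{(k)}|$ is nonnegative. Therefore
\begin{equation*}
w_i^{(k+1)} = (1-h) w_i^{(k)} + h |q_i^{(k)}| \geq (1-h) w_i^{(k)} > 0
\end{equation*}
for every $i$, which gives $w^{(k+1)} > 0$ and closes the induction. Note that $q^{(k)}$ is well defined at this step precisely because $w^{(k)} > 0$, so the matrix $A W^{(k)} A^\top$ is invertible (using $\rank(A) = m$) and the formula $q^{(k)} = W^{(k)} A^\top (A W^{(k)} A^\top)^{-1} b$ makes sense.

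There is essentially no obstacle here: the statement is a one-line consequence of the fact that the damping factor $1-h$ is strictly positive, which is exactly what fails for IRLS (where $h=1$ kills the $(1-h) w^{(k)}$ term and lets coordinates hit zero). The only mild care needed is to confirm that the update is well defined at each stage, i.e.\ that strict positivity of $w^{(k)}$ is what guarantees $q^{(k)}$ exists so that one can even write down $w^{(k+1)}$; this is why the induction must carry $w^{(k)} > 0$ forward rather than merely $w^{(k)} \geq 0$. I would present the argument as a short induction emphasizing the role of $h < 1$, since this is the structural reason Physarum avoids the degeneracy that obstructs the convergence analysis of IRLS.
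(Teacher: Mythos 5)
Your proof is correct and follows exactly the same route as the paper: a straightforward induction where the base case is the assumption $w^{(0)}>0$ and the inductive step uses $(1-h)w_i^{(k)}>0$ (since $h<1$) plus $h\abs{q_i^{(k)}}\geq 0$. Your added remark that strict positivity of $w^{(k)}$ is also what makes $AW^{(k)}A^\top$ invertible and hence $q^{(k)}$ well defined is a nice touch the paper only states after the lemma, but it is the same argument.
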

\begin{proof}
The proof goes via simple induction. For $k=0$ the claim is valid by assumption that $w^{(0)}>0$, next for $k\geq 0$ we have:
$$w_i^{(k+1)} = (1-h)w_i^{(k)}+h\abs{q_i^{(k)}}>0$$
because $h\in (0,1)$.
\end{proof}

\noindent
The above lemma shows in particular that the weighted $\ell_2$-minimization problem solved in every step indeed has  a unique optimal solution.

For the convergence proof let us fix $x^\star\in \R^n$ to be any optimal solution to our $\ell_1$-minimization problem~\eqref{l1_min}.  Without loss of generality we may assume that $x^\star\geq 0$ (if not, multiply by $(-1)$ all the columns of $A$ which correspond to negative entries in $x^\star$, it does not change the problem neither the sequence produced by Physarum).
To track the convergence process of Physarum the two following quantities are useful:
\begin{enumerate}
\item $E(k) = \sum_{i=1}^n \frac{\inparen{q_i^{(k)}}^2}{w_i^{(k)}},$
\item $\barr(k) = \sum_{i=1}^n x_i^\star \ln w_i^{(k)}.$
\end{enumerate}

\medskip
\parag{A technical lemma.}
The following technical lemma from \cite{SVLP15} is particularly useful in our setting. We state the lemma together with a proof to make the paper self-contained. For a version with quantitative bounds we refer the reader to~\cite{SVLP15}. 
\begin{lemma}\label{lemma:technical}
Consider a weight vector $w\in \R_{>0}^n$, then the matrix $L=AWA^\top$ is invertible and:
$$\forall i,j\in [n] \quad |a_i^\top L^{-1} a_j| \leq \frac{\alpha}{w_i}$$
where $\alpha \in \R$ is a constant which depends solely on $A$.
\end{lemma}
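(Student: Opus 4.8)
The plan is to reduce the whole statement to a ratio of determinants and then exploit the Cauchy--Binet formula. Invertibility is immediate: since $w>0$ and $A$ has rank $m$, the matrix $L=AWA^\top=\sum_{k=1}^n w_k a_k a_k^\top$ satisfies $z^\top L z=\sum_k w_k (a_k^\top z)^2\ge 0$, with equality only when $A^\top z=0$, i.e. $z=0$ because $A$ has full row rank. Hence $L\succ 0$ and is invertible. For the quantitative bound I would first express the scalar $a_i^\top L^{-1}a_j$ as a single bordered determinant via the Schur-complement identity
\[
a_i^\top L^{-1} a_j \;=\; -\frac{1}{\det L}\,\det\begin{pmatrix} L & a_j \\ a_i^\top & 0\end{pmatrix}.
\]

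Next I would expand both determinants. For the denominator, Cauchy--Binet gives $\det L=\sum_{S}\det(A_S)^2\,w_S$, where $S$ ranges over $m$-subsets of $[n]$, $A_S$ is the $m\times m$ column-submatrix, and $w_S:=\prod_{k\in S}w_k$. For the numerator I would factor the bordered matrix as $\begin{pmatrix}L & a_j\\ a_i^\top & 0\end{pmatrix}=\begin{pmatrix}A & 0\\ 0 & 1\end{pmatrix}\begin{pmatrix}WA^\top & e_j\\ a_i^\top & 0\end{pmatrix}$ and apply Cauchy--Binet again; only index sets containing the extra coordinate survive, and a cofactor expansion along the last column shows the surviving terms are exactly those with $j\in S$, producing
\[
\det\begin{pmatrix}L & a_j\\ a_i^\top & 0\end{pmatrix}=\sum_{S\ni j}\pm\,\det(A_S)\,\det(A_{S'})\,w_{S\setminus\{j\}},\qquad S':=(S\setminus\{j\})\cup\{i\}.
\]
Getting this combinatorial structure right---which minors appear, the fact that the weight $w_j$ is \emph{absent}, and the bijection $S\leftrightarrow S'$ obtained by swapping $i$ and $j$---is the main obstacle; everything else is bookkeeping.

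Finally I would bound. Multiplying through by $w_i$ and using $w_i w_{S\setminus\{j\}}=w_{S'}$ (terms with $i\in S\setminus\{j\}$ vanish, having a repeated column), the triangle inequality together with $|\det(A_S)|\le\alpha:=\max|\det A'|$ and the reindexing $S\mapsto S'$ gives
\[
w_i\,|a_i^\top L^{-1}a_j|\;\le\;\frac{\alpha\sum_{S'\ni i}|\det(A_{S'})|\,w_{S'}}{\sum_{S}\det(A_S)^2\,w_S}\;\le\;\frac{\alpha\sum_{S'\ni i}|\det(A_{S'})|\,w_{S'}}{\sum_{S'\ni i}\det(A_{S'})^2\,w_{S'}}\;\le\;\frac{\alpha}{\delta},
\]
where $\delta$ is the smallest nonzero absolute value of an $m\times m$ minor of $A$; the last inequality holds because the middle ratio is a weighted average of the quantities $1/|\det(A_{S'})|$. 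This $\delta$ is attained on some $S'\ni i$ whenever $a_i\neq 0$ (a nonzero column extends to a basis since $A$ has rank $m$), and the case $a_i=0$ is trivial. Dividing by $w_i$ yields $|a_i^\top L^{-1}a_j|\le(\alpha/\delta)\,/\,w_i$, with $\alpha/\delta$ a scale-invariant constant depending solely on $A$. I would remark that the degree mismatch between the first power $|\det|$ in the numerator and $\det^2$ in the denominator is exactly what forces the constant to be a \emph{ratio} of minors rather than $\alpha$ alone; a naive Cauchy--Schwarz through the $L^{-1}$ inner product only delivers the weaker $1/\sqrt{w_i w_j}$, which fails when the weights are badly imbalanced.
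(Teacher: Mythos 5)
Your proof is correct, and it takes a genuinely different route from the paper's. The paper argues in two steps: first $a_j^\top L^{-1}a_j \leq \frac{1}{w_j}$, obtained from the Loewner inequality $w_j a_j a_j^\top \preceq L$ tested on the vector $L^{-1}a_j$; then $|a_i^\top L^{-1}a_j| \leq \alpha\, a_j^\top L^{-1}a_j$, obtained by writing $a_j = Lp$ (with $p = L^{-1}a_j$, after flipping signs of columns so that $a_k^\top p \geq 0$ for all $k$) as a conic combination of the columns of $A$, and lower-bounding, over the polyhedron $S_j^i = \{s \geq 0 : \sum_k s_k a_k = a_j,\ s_i > 0\}$, the largest attainable coefficient $r_i$ by some $\frac{1}{\alpha}$ depending only on $A$ --- a bound the paper asserts only qualitatively, deferring quantitative versions to \cite{SVLP15}. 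Your Cramer-rule/Cauchy--Binet computation replaces that polyhedral step entirely, and all your identities check out: the Schur-complement formula for the bordered determinant, the block factorization forcing the extra index into every surviving Cauchy--Binet term, the cofactor expansion killing all $S \not\ni j$, the vanishing of terms with $i \in S\setminus\{j\}$ (repeated column), the injectivity of $S \mapsto S' = (S\setminus\{j\})\cup\{i\}$, and the weighted-average bound by $\frac{1}{\delta}$ after restricting the denominator to $S' \ni i$ (legitimate, since all Cauchy--Binet terms in $\det L$ are nonnegative); you also correctly dispose of the degenerate case $a_i = 0$ and note that a nonzero $a_i$ extends to a column basis since $\rank(A)=m$. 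What your route buys is a self-contained and fully explicit constant $\frac{\alpha}{\delta}$ --- the ratio of the largest to the smallest nonzero absolute $m\times m$ minor --- which in particular recovers the integer-matrix bound of Remark~\ref{remark:alpha}, since $\delta \geq 1$ there; it is in the lineage of the classical uniform-boundedness results for weighted least-squares projections (Ben-Tal--Teboulle, Stewart, Todd). What the paper's route buys is brevity and freedom from determinant bookkeeping, at the price of a non-explicit constant and a couple of glossed points (attainment of the maximal $r_i$ over $S_j^i$, and the finitely many sign patterns hidden in its ``without loss of generality''). Your closing observations are also accurate: the degree mismatch between $|\det|$ and $\det^2$ is exactly why the constant comes out as a ratio of minors, and Cauchy--Schwarz through the $L^{-1}$ inner product only yields $|a_i^\top L^{-1}a_j| \leq \frac{1}{\sqrt{w_i w_j}}$, which is insufficient when $w_j \ll w_i$.
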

\begin{proof}
Take any weight vector $w\in \R^n_{>0}$ and pick $i,j\in [n]$. By symmetry it is enough to establish the above bound with $w_i$ replaced by $w_j$. We first note that:
$$w_j a_j a_j^\top \preceq  \sum_{k=1}^n w_k a_k a_k^\top =L,$$
where $\preceq$ is the Loewner ordering. By testing the above on the vector $L^{-1}a_j$, we obtain:
$$(L^{-1}a_j)^\top w_j a_j a_j^\top (L^{-1}a_j) \leq (L^{-1}a_j)^\top L (L^{-1}a_j).$$
By a simple calculation the above yields: $$a_j^\top L^{-1}a_j\leq \frac{1}{w_j}.$$ In the remaining part of the argument we  show that $|a_i^\top L^{-1}a_j| \leq \alpha a_j^\top L^{-1}a_j$, where $\alpha$ will be specified later.

  If $a_i^\top L^{-1}a_j=0$ then there is nothing to prove. Otherwise, let us call $p=L^{-1}a_j$, we may assume without loss of generality that $a_k^\top p\geq 0$ for every $k\in [n]$ (we may reduce our problem to this case by multiplying some $a_k$'s by $(-1)$). Because $Lp=a_j$, we obtain:
$$a_j = Lp = \inparen{\sum_{i=k}^n w_k a_ka_k^\top}p = \sum_{i=k}^n w_k (a_k^\top p) a_k$$
Hence we have just obtained a representation of $a_j$ as a conic combination of $a_1, a_2, \ldots, a_n$. Moreover, the set $$S_j^i=\{s\in \R^n: s\geq 0, \; \sum_{k=1}^n s_k a_k=a_j, \; s_i>0\}$$ is non-empty. Let us take an element $r\in \R^n$ of $S_j^i$ which maximizes $r_i$. Note that $S_j^i$ depends solely on $A$. Hence there is some lower bound for $r_i$, let us call it $\frac{1}{\alpha}$ for some $\alpha>0$. We obtain:
$$a_j^\top L^{-1} a_j = p^\top a_j = p^\top \inparen{\sum_{k=1}^n r_k a_k} = \sum_{k=1}^n r_k (p^\top a_k)\geq r_i p^\top a_i\geq \frac{1}{\alpha} a_iL^{-1}a_j.$$
\end{proof}

\begin{remark}\label{remark:alpha}
From now on we  state all bounds with respect to $\alpha$ obtained in the above lemma.  \cite{SVLP15} shows that if $A$ is a matrix with integer entries then $\alpha$ can be chosen to be:
$$ \max \inbraces{|\det(A')|:A' \; \mathrm{is \;  a \;  square  \; submatrix  \; of \; } \; A}.$$
In general, one can bound $\alpha$ in terms of the maximum absolute value of all entries of $(A')^{-1}$ over all invertible square submatrices $A'$ of $A$.
\end{remark}

\noindent
The following corollary  is used multiple times in the convergence proof. Recall that we work under the assumption that $w^{(0)}\geq \abs{y^{(0)}}$ for some $y^{(0)}\in \R^n$ with $Ay^{(0)}=b$.

\begin{corollary}\label{cor:bounded_potentials}
Suppose that $\inbraces{w^{(k)}}_{k\in \N}$ is the sequence produced by Physarum and $\inbraces{q^{(k)}}_{k\in \N}$ is the corresponding sequence of weighted $\ell_2$-minimizers. Then, for every $k$:
$$\forall i\in [n] \quad \frac{\abs{q_i^{(k)}}}{w_i^{(k)}} \leq n\alpha,$$
for $\alpha$ being the same constant as in Lemma~\ref{lemma:technical}.
\end{corollary}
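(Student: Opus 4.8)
The plan is to reduce the claimed bound to an estimate on the scalar $a_i^\top L^{-1} b$ and then control that using the feasible certificate together with Lemma~\ref{lemma:technical}. First I would write out the closed form of the weighted $\ell_2$-minimizer. Fixing a step $k$ and abbreviating $W = W^{(k)}$, $w = w^{(k)}$ and $L = A W A^\top$, the minimizer is $q^{(k)} = W A^\top L^{-1} b$, so coordinate-wise $q_i^{(k)} = w_i^{(k)} \, (a_i^\top L^{-1} b)$. Dividing through by $w_i^{(k)}$, the quantity to be bounded becomes
\[
\frac{\abs{q_i^{(k)}}}{w_i^{(k)}} = \abs{a_i^\top L^{-1} b},
\]
so everything reduces to estimating this single inner product.

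Next I would invoke the hidden-variable structure established in Section~\ref{sec:comparison}. Under the standing assumption $\abs{y^{(0)}} \leq w^{(0)}$ with $Ay^{(0)} = b$, for every $k$ there exists a feasible $y^{(k)}\in \R^n$ with $Ay^{(k)} = b$ and $\abs{y^{(k)}} \leq w^{(k)}$. Writing $b = Ay^{(k)} = \sum_{j=1}^n y_j^{(k)} a_j$ and expanding the inner product gives
\[
a_i^\top L^{-1} b = \sum_{j=1}^n y_j^{(k)} \, (a_i^\top L^{-1} a_j).
\]

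Finally I would apply Lemma~\ref{lemma:technical} in its symmetric form $\abs{a_i^\top L^{-1} a_j} \leq \frac{\alpha}{w_j^{(k)}}$ (the proof of that lemma establishes the estimate with either index in the denominator). Combining the triangle inequality with the feasibility bound $\abs{y_j^{(k)}} \leq w_j^{(k)}$, each summand satisfies $\abs{y_j^{(k)}} \cdot \frac{\alpha}{w_j^{(k)}} \leq \alpha$, so that $\abs{a_i^\top L^{-1} b} \leq n\alpha$, which is precisely the asserted bound.

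The only real care needed — and the natural place to slip — is the choice of index in the denominator when invoking Lemma~\ref{lemma:technical}: one must bound $\abs{a_i^\top L^{-1} a_j}$ by $\alpha/w_j^{(k)}$ rather than $\alpha/w_i^{(k)}$, so that it pairs with $\abs{y_j^{(k)}}$ and the ratios $\abs{y_j^{(k)}}/w_j^{(k)}$ collapse to at most $1$ before summing. Existence of the feasible $y^{(k)}$ is not an obstacle, as it follows directly from the invariance of the set $P$ shown earlier, and positivity of $w^{(k)}$ (hence invertibility of $L$) is already guaranteed.
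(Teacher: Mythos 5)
Your proof is correct and takes essentially the same route as the paper's: write $q^{(k)} = W^{(k)}A^\top L^{-1}b$, reduce the ratio to $\abs{a_i^\top L^{-1}b}$, expand $b = Ay^{(k)}$ via the hidden feasible certificate with $\abs{y^{(k)}}\leq w^{(k)}$, and apply Lemma~\ref{lemma:technical} termwise before summing. Your caution about the denominator index is in fact warranted: the paper's displayed chain writes $\alpha/w_i^{(k)}$ where the argument requires $\alpha/w_j^{(k)}$ (an apparent typo, since it is the collapse $\abs{y_j^{(k)}}/w_j^{(k)}\leq 1$ that yields the final bound $n\alpha$), so your bookkeeping is the correct one.
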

\begin{proof}
Let $L=AW^{(k)}A^\top$ (note that both $L$ and $L^{-1}$ are symmetric matrices), then:
$$q^{(k)} = W^{(k)} A^\top L^{-1}b.$$
Hence:
$$\frac{|q_i^{(k)}|}{w_i^{(k)}} = \abs{a_i^\top L^{-1}b}.$$
Recall that $b=Ay^{(k)}$ where $\abs{y^{(k)}}\leq w^{(k)}$, hence:
\begin{align*}
\frac{|q_i^{(k)}|}{w_i^{(k)}} &= \abs{a_i^\top L^{-1}A y^{(k)}}\\
 & \leq \sum_{j=1}^n  \abs{y_j^{(k)} \cdot a_i^\top L^{-1}a_j}\\
 & = \sum_{j=1}^n  \abs{y_j^{(k)}}\cdot \abs{ a_j^\top L^{-1}a_i}\\
 & \stackrel{\mbox{Lemma \ref{lemma:technical}}}{\leq}  \sum_{j=1}^n  \abs{y_j^{(k)}}\cdot \frac{\alpha}{w_i^{(k)}}\\
 &\leq n\alpha.
\end{align*}
\end{proof}

\medskip
\parag{Analysis of potentials.}
\begin{lemma}\label{lemma:norm_drop}
For every $k\in \N$ we have $\norm{w^{(k+1)}}_1 \leq \norm{w^{(k)}}_1$. Furthermore, if for some $\eps \in \inparen{0,\frac{1}{2}}$ we have $\norm{w^{(k)}}>\inparen{1+\frac{\eps}{3}}E(k)$ then $\norm{w^{(k+1)}}\leq  \inparen{1-\frac{h\eps}{8}} \norm{w^{(k)}}$.
\end{lemma}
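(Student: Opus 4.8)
The plan is to reduce both claims to two elementary facts about the weighted $\ell_2$-minimizer $q^{(k)}$: a Cauchy--Schwarz bound linking $\norm{q^{(k)}}_1$ to the energy $E(k)$ and the weight mass $\norm{w^{(k)}}_1$, together with the energy estimate $E(k)\leq \norm{w^{(k)}}_1$. Since every $w^{(k)}>0$, the update rule gives $\norm{w^{(k+1)}}_1=(1-h)\norm{w^{(k)}}_1+h\norm{q^{(k)}}_1$ exactly, so everything hinges on controlling $\norm{q^{(k)}}_1$. By Cauchy--Schwarz, writing $\abs{q_i^{(k)}}=\frac{\abs{q_i^{(k)}}}{\sqrt{w_i^{(k)}}}\cdot \sqrt{w_i^{(k)}}$, I would obtain $\norm{q^{(k)}}_1\leq \sqrt{E(k)\,\norm{w^{(k)}}_1}$. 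For the energy bound I would use that $q^{(k)}$ minimizes $\sum_i x_i^2/w_i^{(k)}$ over $\{Ax=b\}$ together with the invariant established in Section~\ref{sec:comparison}, namely that there is a feasible $y^{(k)}$ with $\abs{y^{(k)}}\leq w^{(k)}$; testing optimality against $x=y^{(k)}$ gives $E(k)\leq \sum_i (y_i^{(k)})^2/w_i^{(k)}\leq \sum_i w_i^{(k)}=\norm{w^{(k)}}_1$.

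For the monotonicity claim, combining the two bounds yields $\norm{q^{(k)}}_1\leq \sqrt{E(k)\,\norm{w^{(k)}}_1}\leq \norm{w^{(k)}}_1$, whence $\norm{w^{(k+1)}}_1\leq (1-h)\norm{w^{(k)}}_1+h\norm{w^{(k)}}_1=\norm{w^{(k)}}_1$.

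For the geometric-decrease claim, I would feed the hypothesis $\norm{w^{(k)}}_1>\inparen{1+\frac{\eps}{3}}E(k)$, i.e. $E(k)<\frac{1}{1+\eps/3}\norm{w^{(k)}}_1$, into the Cauchy--Schwarz bound to get the strict improvement $\norm{q^{(k)}}_1<\frac{1}{\sqrt{1+\eps/3}}\norm{w^{(k)}}_1$. Substituting into the update gives $\norm{w^{(k+1)}}_1<\inparen{(1-h)+\frac{h}{\sqrt{1+\eps/3}}}\norm{w^{(k)}}_1$, so it remains to verify the scalar inequality $(1-h)+\frac{h}{\sqrt{1+\eps/3}}\leq 1-\frac{h\eps}{8}$, which after dividing by $h$ reduces to $1-\frac{1}{\sqrt{1+\eps/3}}\geq \frac{\eps}{8}$. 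Setting $t=\eps/3\in\inparen{0,\frac16}$ and noting that $\frac{d}{dt}\inparen{1-(1+t)^{-1/2}}=\frac12(1+t)^{-3/2}\geq \frac12(7/6)^{-3/2}>\frac38$ on this range, integrating from $0$ gives $1-(1+t)^{-1/2}\geq \frac38 t=\frac{\eps}{8}$, as needed.

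The only genuinely delicate point is the energy estimate $E(k)\leq\norm{w^{(k)}}_1$: it is exactly here that the choice of starting point and the $K$-invariance from Section~\ref{sec:comparison}, guaranteeing a feasible $y^{(k)}$ dominated by $w^{(k)}$, are used, and without this structural invariant the bound --- and hence both parts of the lemma --- would fail. Everything else is either the exact form of the Euler update (relying on $w^{(k)}>0$) or the routine one-variable calculus inequality above.
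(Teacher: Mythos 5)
Your proposal is correct and follows essentially the same route as the paper's proof: the exact identity $\norm{w^{(k+1)}}_1=(1-h)\norm{w^{(k)}}_1+h\norm{q^{(k)}}_1$, the Cauchy--Schwarz bound $\norm{q^{(k)}}_1\leq \sqrt{E(k)\,\norm{w^{(k)}}_1}$, and the energy estimate $E(k)\leq \norm{w^{(k)}}_1$ obtained by testing optimality against the dominated feasible point $y^{(k)}$ from the invariant of Section~\ref{sec:comparison}. The only difference is cosmetic: you actually verify the scalar inequality $1-\inparen{1+\frac{\eps}{3}}^{-1/2}\geq \frac{\eps}{8}$ by a derivative bound, whereas the paper simply asserts it.
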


\begin{proof}
We have:
$$ \norm{w^{(k)}}_1 - \norm{w^{(k+1)}}_1 = h \sum_{i=1}^n \inparen{w_i^{(k)}-\abs{q_i^{(k)}}}=h\inparen{ \norm{w^{(k)}}_1-\norm{q^{(k)}}_1}.$$
Furthermore:
$$\norm{q^{(k)}}_1 = \sum_{i=1}^n \abs{q_i^{(k)}} = \sum_{i=1}^n \sqrt{w_i^{(k)}} \frac{|q_i^{(k)}|}{\sqrt{w_i^{(k)}}},$$
\noindent by applying the Cauchy-Schwarz inequality, we obtain:
$$
 \sum_{i=1}^n \sqrt{w_i^{(k)}} \frac{|q_i^{(k)}|}{\sqrt{w_i^{(k)}}}\leq  \norm{w^{(k)}}_1^{1/2} \cdot E(k)^{1/2}.$$

\noindent Thus we finally get:
$$h\inparen{\norm{w^{(k)}}_1-\norm{q^{(k)}}_1}\geq h \norm{w^{(k)}}_1^{1/2} \inparen{\norm{w^{(k)}}_1^{1/2} - E(k)^{1/2}}.$$ 
Since $q^{(k)}$ minimizes the weighted $\ell_2$ norm over the subspace $Ax=b$, we obtain:
$$E(k) = \sum_{i=1}^n \frac{\inparen{q_i^{(k)}}^2}{w_i^{(k)}} \leq \sum_{i=1}^n \frac{\inparen{y_i^{(k)}}^2}{w_i^{(k)}}\leq \sum_{i=1}^n \frac{\inparen{w_i^{^{(k)}}}^2}{w_i^{(k)}} = \norm{w^{(k)}}_1.$$
Hence the first part of the lemma is proved. Assume now $\norm{w^{(k)}}>\inparen{1+\frac{\eps}{3}}E(k)$. We get:
\begin{align*}
\norm{w^{(k)}}_1 - \norm{w^{(k+1)}}_1 &\geq h  \norm{w^{(k)}}_1^{1/2} \inparen{\norm{w^{(k)}}_1^{1/2} - E(k)^{1/2}}\\
&\geq h \inparen{1-\inparen{1+\frac{\eps}{3}}^{-1/2}} \norm{w^{(k)}}_1 .
\end{align*}
It remains to note that $1-\inparen{1+\frac{\eps}{3}}^{-1/2}\geq \frac{\eps}{8}$.
\end{proof}

To analyze the behavior of $\barr(k)$ we  use the following elementary inequality:
\begin{align}\label{ineq:log}
x-x^2\leq \ln(1+x)& \leq x 
\end{align}
which is valid for all $-\frac{1}{2}\leq x\leq \frac12$. Let us also state the following useful fact.
\begin{fact}\label{fact:energy}
Let $w\in \R^{n}_{>0}$ and $q\in \R^n$ be the solution to $\displaystyle \min_{x\in \R^n} \inbraces{\sum_{i=1}^n \frac{x_i^2}{w_i}: Ax=b}$. Then:
$$b^\top L^{-1} b = \sum_{i=1}^n \frac{q_i^2}{w_i},$$
where $L=AWA^\top$.
\end{fact}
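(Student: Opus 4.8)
The plan is to reduce this identity to a one-line matrix computation by plugging in the closed form for the weighted $\ell_2$-minimizer. Recall from the preliminaries (and from the way the Physarum update is written) that the minimizer of $\sum_{i=1}^n \frac{x_i^2}{w_i}$ over $\{x : Ax=b\}$ is given explicitly by
$$q = WA^\top L^{-1} b, \qquad L = AWA^\top,$$
where $L$ is invertible by Lemma~\ref{lemma:technical} (using $w\in\R_{>0}^n$ and $\rank(A)=m$), so $L^{-1}$ is well defined and symmetric. The key observation is that the quantity we want to control, $\sum_{i=1}^n \frac{q_i^2}{w_i}$, is just the quadratic form $q^\top W^{-1} q$, since $W^{-1}=\diag{(1/w_1,\ldots,1/w_n)}$.

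With this in hand I would simply substitute the closed form for $q$ and simplify. Writing $q^\top = b^\top L^{-1} A W$ (using $(L^{-1})^\top = L^{-1}$ and $W^\top = W$), we get
$$\sum_{i=1}^n \frac{q_i^2}{w_i} = q^\top W^{-1} q = b^\top L^{-1} A W \, W^{-1} \, W A^\top L^{-1} b = b^\top L^{-1} \inparen{AWA^\top} L^{-1} b.$$
Now I would invoke the definition $AWA^\top = L$ to collapse the middle factor, leaving
$$b^\top L^{-1} L\, L^{-1} b = b^\top L^{-1} b,$$
which is exactly the claimed identity.

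There is essentially no real obstacle here: the statement is an algebraic identity that follows directly from the explicit projection formula for $q$, and the only points requiring a word of justification are that $L$ is invertible (supplied by Lemma~\ref{lemma:technical} together with the standing assumption $w>0$) and that $W$ cancels cleanly, i.e.\ $W W^{-1} W = W$, so that the middle factor reassembles into $AWA^\top = L$. If one preferred to avoid matrix notation entirely, the same computation can be carried out coordinatewise by writing $q_i = w_i\, a_i^\top L^{-1} b$, so that $\frac{q_i^2}{w_i} = w_i (a_i^\top L^{-1} b)^2$, and then summing to recognize $\sum_{i=1}^n w_i\, a_i a_i^\top = L$; but the matrix form above is the cleanest route.
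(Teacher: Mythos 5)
Your proof is correct and follows essentially the same route as the paper's own argument: substitute the explicit formula $q = WA^\top L^{-1}b$, rewrite the weighted sum as $q^\top W^{-1} q$, and simplify using $AWA^\top = L$. The extra care you take about invertibility of $L$ (via Lemma~\ref{lemma:technical} and $w>0$) is consistent with the paper's setup and adds nothing problematic.
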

\begin{proof}
We use the explicit formula $q=WA^\top L^{-1}b$. Note that $ \sum_{i=1}^n \frac{q_i^2}{w_i} = q^\top W^{-1} q$ and hence:

\begin{align*}
q^\top W^{-1} q &= b^\top L^{-1}AW W^{-1}WA^\top L^{-1}b \\
&= b^\top L^{-1}(AWA^\top) L^{-1}b\\
& =b^\top L^{-1}b .
\end{align*}
\end{proof}
\noindent
We continue with a lemma describing the behavior of $\barr(k)$.

\begin{lemma}\label{lemma:barr}
Suppose that $h\leq \frac{\eps}{40\cdot (n\alpha)^2}$, then for every $k$ it holds that $$\barr(k+1)\geq \barr(k) +h\inparen{E(k) -\inparen{1+\frac{\eps}{10}}\norm{x^\star}_1}.$$
\end{lemma}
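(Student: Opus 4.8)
The plan is to expand the one-step increment of $\barr$ using the Physarum update and control each logarithm via the elementary bound~\eqref{ineq:log}. First I would write
$$\barr(k+1)-\barr(k) = \sum_{i=1}^n x_i^\star \ln \frac{w_i^{(k+1)}}{w_i^{(k)}},$$
and observe that the update $w^{(k+1)}=(1-h)w^{(k)}+h\abs{q^{(k)}}$ gives
$$\frac{w_i^{(k+1)}}{w_i^{(k)}} = 1 + h\inparen{\frac{\abs{q_i^{(k)}}}{w_i^{(k)}}-1}.$$
By Corollary~\ref{cor:bounded_potentials} the quantity $\frac{\abs{q_i^{(k)}}}{w_i^{(k)}}$ lies in $[0,n\alpha]$, so the argument $x_i \defeq h\inparen{\frac{\abs{q_i^{(k)}}}{w_i^{(k)}}-1}$ satisfies $\abs{x_i}\leq h\,n\alpha \leq \frac{\eps}{40n\alpha}\leq \tfrac12$, which places us inside the valid range of~\eqref{ineq:log}. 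Applying the lower bound $\ln(1+x)\geq x-x^2$ term by term then yields
$$\barr(k+1)-\barr(k) \geq h\sum_{i=1}^n x_i^\star\inparen{\frac{\abs{q_i^{(k)}}}{w_i^{(k)}}-1} - h^2\sum_{i=1}^n x_i^\star\inparen{\frac{\abs{q_i^{(k)}}}{w_i^{(k)}}-1}^2.$$

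The crux is the linear term, which I expect to be the main obstacle. Here I would invoke the electrical interpretation: since $q^{(k)}=W^{(k)}A^\top L^{-1}b$ with $L=AW^{(k)}A^\top$, we have $\frac{q_i^{(k)}}{w_i^{(k)}}=a_i^\top L^{-1}b$. Because $x^\star\geq 0$, the triangle inequality gives
$$\sum_{i=1}^n x_i^\star \frac{\abs{q_i^{(k)}}}{w_i^{(k)}} = \sum_{i=1}^n x_i^\star\abs{a_i^\top L^{-1}b} \geq \abs{\sum_{i=1}^n x_i^\star a_i^\top L^{-1}b} = \abs{(Ax^\star)^\top L^{-1}b} = \abs{b^\top L^{-1}b} = E(k),$$
where the final equalities use $Ax^\star=b$, the positivity of $L^{-1}$, and Fact~\ref{fact:energy}. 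Combined with $\sum_i x_i^\star=\norm{x^\star}_1$ (valid since $x^\star\geq 0$), the linear term is bounded below by $h\inparen{E(k)-\norm{x^\star}_1}$. This is the one genuinely delicate point: the feasibility of $x^\star$ together with the energy identity conspire to reproduce exactly $E(k)$ from a sum that a priori only dominates it.

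It remains to absorb the quadratic error term, which is routine. Using Corollary~\ref{cor:bounded_potentials} once more, $\inparen{\frac{\abs{q_i^{(k)}}}{w_i^{(k)}}-1}^2\leq (n\alpha)^2$, so the error is at most $h^2(n\alpha)^2\sum_i x_i^\star = h^2(n\alpha)^2\norm{x^\star}_1$; the step-size hypothesis $h\leq \frac{\eps}{40(n\alpha)^2}$ then bounds this by $\frac{h\eps}{40}\norm{x^\star}_1\leq \frac{h\eps}{10}\norm{x^\star}_1$. Putting the two estimates together gives
$$\barr(k+1)-\barr(k)\geq h E(k) - h\inparen{1+\tfrac{\eps}{10}}\norm{x^\star}_1,$$
which is precisely the claim. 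The constant $40$ leaves room to spare here, consistent with the remark that the stated step size is not optimized.
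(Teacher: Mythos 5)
Your proof is correct and takes essentially the same route as the paper's: expand the increment of $\barr$, apply $\ln(1+x)\geq x-x^2$ from~\eqref{ineq:log}, recover $E(k)$ from the linear term via $Ax^\star=b$ together with Fact~\ref{fact:energy}, and absorb the quadratic term using Corollary~\ref{cor:bounded_potentials} and the step-size hypothesis. The only cosmetic differences are that you explicitly verify the range condition needed for~\eqref{ineq:log} (which the paper leaves implicit) and use the slightly sharper bound $\inparen{\frac{\abs{q_i^{(k)}}}{w_i^{(k)}}-1}^2\leq (n\alpha)^2$ where the paper settles for $(2n\alpha)^2$.
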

\begin{proof}
We have:
\begin{align*}
\barr(k+1) - \barr(k)& = \sum_{i=1}^n x^\star_i \ln \frac{w_i^{(k+1)}}{w_i^{(k)}}\\
&=\sum_{i=1}^n x^\star_i \ln \inparen{1+h\inparen{\frac{\abs{q_i^{(k)}}}{w_i^{(k)}}-1}}
\end{align*}
We apply the left-hand side of~\eqref{ineq:log} to every summand. This is possible by our assumption $x^\star\geq 0$. For simplicity let $z_i \defeq \inparen{\frac{\abs{q_i^{(k)}}}{w_i^{(k)}}-1}$. We obtain:
\begin{align}
\begin{split}\label{ineq:barr}
\barr(k+1) - \barr(k)& \geq  \sum_{i=1}^n x^\star_i (hz_i - h^2z_i^2)\\
&=h\sum_{i=1}^n x^\star_i z_i - h^2\sum_{i=1}^n x^\star_iz_i^2
\end{split}
\end{align}

We analyze the linear term and quadratic term separately. We have:
$$\sum_{i=1}^n x^\star_i z_i = \sum_{i=1}^n x^\star_i \inparen{\frac{\abs{q_i^{(k)}}}{w_i^{(k)}}-1} = \sum_{i=1}^n x^\star_i \inparen{\frac{\abs{q_i^{(k)}}}{w_i^{(k)}}} - \norm{x^\star}_1.$$
We lower-bound the first order term:
\begin{align*}
\sum_{i=1}^n x^\star_i \inparen{\frac{\abs{q_i^{(k)}}}{w_i^{(k)}}} &\geq \sum_{i=1}^n x^\star_i \frac{q_i^{(k)}}{w_i^{(k)}} \\ &=(x^\star)^\top \inparen{W^{(k)}}^{-1} q^{(k)} \\
&= (x^\star)^\top \inparen{W^{(k)}}^{-1} W^{(k)} A^\top L^{-1} b\\
&=(x^\star)^\top  A^\top L^{-1} b \\
&= b^\top L^{-1}b
\end{align*}
where $L=AW^{(k)}A^\top$. The above, together with Fact~\ref{fact:energy} give:
$$\sum_{i=1}^n x^\star_i \inparen{\frac{\abs{q_i^{(k)}}}{w_i^{(k)}}}\geq b^\top L^{-1}b = E(k).$$
Thus we have obtained:
\begin{equation}\label{ineq:first_order}
\sum_{i=1}^n x^\star_i z_i \geq E(k) - \norm{x^\star}_1.
\end{equation}
To bound the quadratic term in~\eqref{ineq:barr} we just apply Corollary~\ref{cor:bounded_potentials}:
$$\sum_{i=1}^n x^\star_iz_i^2 \leq \sum_{i=1}^n x^\star_i\inparen{n\alpha-1}^2\leq (2n\alpha)^2 \norm{x^\star}_1.$$
We combine~\eqref{ineq:barr} with our bounds on first and second order terms to obtain:

\begin{align*}
\barr(k+1) - \barr(k) &\geq h(E(k) - \norm{x^\star}_1)-h^2(2n\alpha)^2 \norm{x^\star}_1\\
&\geq h(E(k) - \norm{x^\star}_1)-h\cdot \frac{\eps}{10}\cdot \norm{x^\star}_1.
\end{align*}
\end{proof}

\medskip
\parag{Convergence proof.}
We are ready to prove the main result.
\begin{proofof}{of Theorem~\ref{thm:convergence}}
We would like to count the number of steps till the first moment when $\norm{w^{(k)}}_1 \leq (1+\eps)\norm{x^\star}_1$. From Lemma~\ref{lemma:norm_drop} the $\ell_1-$norm of $w^{(k)}$ is non-increasing with $k$ and whenever $\norm{w^{(k)}}_1>\inparen{1+\frac{\eps}{3}}E(k)$, $\norm{w^{(k)}}_1$ decreases by a multiplicative factor of $(1-\frac{h\eps}{8})$. This means that there can be at most 
$$\log_{(1-h\eps)^{-1}}\inparen{\frac{M}{1+\eps}} = O\inparen{\frac{\ln M}{h\eps}}$$
such steps. What about steps for which $\norm{w^{(k)}}_1\leq \inparen{1+\frac{\eps}{3}}E(k)$? We obtain:
$$(1+\eps)\norm{x^\star}_1 \leq \norm{w^{(k)}}_1 \leq  \inparen{1+\frac{\eps}{3}}E(k).$$
This in particular  implies that:
$$E(k) \geq\inparen{1+\frac{\eps}{2}}\norm{x^\star}_1.$$
We apply Lemma~\ref{lemma:barr} to conclude that in such a case:
$$\barr(k+1) \geq \barr(k) + \frac{h\eps}{3} \norm{x^\star}_1.$$
Let us now analyze how $\barr(k)$ can change throughout steps. We start with $\barr(0)\geq 0$ (since $w_i^{(0)}\geq 1$ for every $i\in [n]$) and $\barr(k)$ is upper bounded by $\norm{x^\star}_1\cdot  (\ln M+ \ln \norm{x^\star}_1)$ (this holds because $\norm{w^{(k)}}_1 \leq \norm{w^{(0)}}_1 \leq M \norm{x^\star}_1$). At every step when $\norm{w^{(k)}}_1>\inparen{1+\frac{\eps}{3}}E(k)$ the largest possible drop of $\barr(k)$ is (by Lemma~\ref{lemma:barr}) upper-bounded by:
$$h\inparen{1+\frac{\eps}{10}}\norm{x^\star}_1 \leq 2h\norm{x^\star}_1.$$
Note that by the reasoning above there are at most $O\inparen{\frac{\ln M}{h\eps}}$ such steps. On the other hand, if $\norm{w^{(k)}}_1\leq \inparen{1+\frac{\eps}{3}}E(k)$ then $\barr(k)$ increases by at least: $ \frac{h\eps}{3} \norm{x^\star}_1.$ This means that the total drop of $\barr(k)$ over the whole computation is at most:
$$O\inparen{\frac{\ln M}{\eps}\norm{x^\star}_1}.$$
Hence the number of steps in which $\norm{w^{(k)}}_1\leq \inparen{1+\frac{\eps}{3}}E(k)$ is at most:
$$O\inparen{\frac{\frac{\ln M}{\eps}\norm{x^\star}_1 + \norm{x^\star}_1\cdot  \inparen{\ln M+ \ln \norm{x^\star}_1}}{\frac{h\eps}{3} \norm{x^\star}_1}} = O\inparen{ \frac{\ln M + \ln \norm{x^\star}_1}{h\eps^2}}.$$
\end{proofof}

\appendix

\section{Example for Non-convergence of IRLS}\label{app:nonconvergence}
We present an example instance for which IRLS fails to converge to the optimal solution. More precisely we prove the following.
\begin{theorem}\label{thm:nonconvergence}
There exists an instance $(A,b)$ of the basis pursuit problem~\eqref{l1_min_intro} and a feasible, strictly positive point $y\in \R^n_{>0}$ such that if IRLS is initialized at $y^{(0)}= y$ (and $\{y^{(k)}\}_{k\in \N}$ is the sequence produced by IRLS) then $\norm{y^{(k)}}_1 $ does not converge to the optimal value.
\end{theorem}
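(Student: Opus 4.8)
The plan is to produce a \emph{single explicit instance} together with a strictly positive feasible starting point whose IRLS trajectory gets stuck at a non-optimal point. The first thing I would record is that, as already shown above, $\norm{y^{(k)}}_1$ is non-increasing in $k$ and bounded below by $\opt \defeq \min\{\norm{x}_1 : Ax=b\}$; hence it always converges, and ``does not converge to the optimal value'' simply means its limit is strictly larger than $\opt$. So it suffices to force the iterates onto a feasible point of suboptimal $\ell_1$-norm and keep them there.

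Next I would characterize the fixed points of the IRLS map. Writing the weighted $\ell_2$-solution through its KKT conditions, a feasible $y$ is fixed exactly when there is a dual vector $\mu$ with $(A^\top\mu)_i=\sgn(y_i)$ on $\supp(y)$ (off the support the zero-entry convention freezes the coordinate). If $y$ has full support this $\mu$ is a dual certificate with $\norm{A^\top\mu}_\infty=1$, so $y$ is already $\ell_1$-optimal; therefore every \emph{suboptimal} fixed point must lie on the boundary, and its suboptimality is witnessed by some off-support index $j$ with $|(A^\top\mu)_j|>1$. Linearizing the map at such a point shows that coordinate $j$ \emph{grows}, i.e.\ suboptimal fixed points are repelling. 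This is consistent with the measure-zero Conjecture and tells me the starting point cannot be generic.

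To defeat the repulsion I would not rely on asymptotic attraction but instead drive a coordinate to \emph{exactly} zero in one step and let the zero-entry convention freeze it. Concretely I propose the instance $A=\begin{pmatrix}-3&1&0\\-1&0&1\end{pmatrix}$, $b=(-3,7)^\top$, with $y^{(0)}=(7,18,14)^\top$, which is strictly positive and satisfies $Ay^{(0)}=b$. A direct computation of one IRLS step yields $y^{(1)}=(0,-3,7)^\top$: the first coordinate vanishes, so from then on the convention imposes $x_1=0$, and since the remaining two columns form the identity the restricted system has the unique solution $(x_2,x_3)=(-3,7)$. Hence $y^{(k)}=(0,-3,7)^\top$ for every $k\ge 1$, with $\norm{y^{(k)}}_1=10$, whereas the true optimum is $x^\star=(1,0,8)^\top$ with $\norm{x^\star}_1=9$. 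The dual vector $\mu=(-1,1)^\top$ satisfies $(A^\top\mu)_{\{2,3\}}=\sgn(y^{(1)}_{\{2,3\}})$ while $(A^\top\mu)_1=2$, re-confirming both that $(0,-3,7)^\top$ is a genuine frozen fixed point and that it is suboptimal.

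The only real content is \emph{finding} such an instance, precisely because the repulsion of suboptimal fixed points makes the exact-landing start non-generic: one must satisfy simultaneously the feasibility constraint $Ad=b$ with $d>0$, the one-step vanishing condition $a_1^\top(ADA^\top)^{-1}b=0$ (where $D=\diag{d}$), and the suboptimality condition $|(A^\top\mu)_1|>1$. I would obtain the example by taking $a_2,a_3$ to be the standard basis, treating the column $a_1=(p,q)^\top$ and the positive weights $d$ as free parameters, and reducing the vanishing condition to the single scalar equation $d_1d_3p^2+d_1d_2q^2+d_2d_3(p+q)=0$; since the first two terms are nonnegative this forces $p+q<0$, after which the signs of $p,q$ are chosen so that the induced signs of $b$ make $|(A^\top\mu)_1|>1$. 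The displayed numbers are one such solution, and verifying the one-step computation together with the two inequalities is then routine arithmetic.
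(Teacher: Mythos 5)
Your proposal is correct, and I verified the arithmetic: with $W=\diag{(7,18,14)}$ one gets $AWA^\top=\left(\begin{smallmatrix}81&21\\21&21\end{smallmatrix}\right)$, $(AWA^\top)^{-1}b=(-\nfrac16,\nfrac12)^\top$, and hence $q=WA^\top(AWA^\top)^{-1}b=(0,-3,7)^\top$ exactly as claimed; the identity $d_1d_3p^2+d_1d_2q^2+d_2d_3(p+q)=882+126-1008=0$ also checks out, as does the optimum $x^\star=(1,0,8)^\top$ with $\norm{x^\star}_1=9<10$. The core mechanism is the same as the paper's: both proofs exploit the observation that once $y^{(k)}_i=0$ the zero-freezing convention forces $y^{(l)}_i=0$ for all $l>k$, and both engineer a strictly positive feasible start whose \emph{first} iterate has an exact zero in a coordinate the optimum needs. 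The difference is in how the exact zero is produced. The paper takes an $8$-vertex shortest-path instance and uses electrical-flow symmetry: the resistances are balanced so that the potentials at the two endpoints of the bridge edge $(u_3,u_4)$ coincide, making the flow there vanish--a Wheatstone-bridge argument requiring no matrix computation. You instead solve the scalar vanishing condition $a_1^\top(ADA^\top)^{-1}b=0$ directly over a parametrized $2\times 3$ family. Your route buys three things: a minimal instance whose verification is pure arithmetic; a trivially constant tail (after freezing, the remaining columns form the identity, so $y^{(k)}=(0,-3,7)^\top$ for all $k\ge 1$, whereas the paper's graph instance still has a nontrivial restricted system); and an explicit dual certificate $\mu=(-1,1)^\top$ with $(A^\top\mu)_1=2>1$ certifying suboptimality of the frozen point, which the paper gets instead from uniqueness of the shortest path. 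One caveat: your claims that full-support fixed points are $\ell_1$-optimal and that suboptimal fixed points are repelling are stated without proof (the linearization step is only sketched); since they serve purely as motivation for why the starting point must be non-generic, and the theorem itself rests entirely on the verified one-step computation, this does not affect correctness--but you should flag those remarks as heuristic rather than established.
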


\noindent 
   The proof is based on the simple observation that if IRLS reaches a point $y^{(k)}$ with $y^{(k)}_i=0$ for some $k\in \N$, $i\in [n]$ then $y^{(l)}_i=0$ for all $l>k$. 

Let us consider an undirected graph $G=(V,E)$ with $V=\{u_0, u_1, ...,u_6,u_7\}$ and let $s=u_0$, $t=u_7$. $G$ is depicted in Figure~\ref{fig:graph}.

\begin{figure}[!ht]
  
  \centering
    \includegraphics[width=0.6\textwidth]{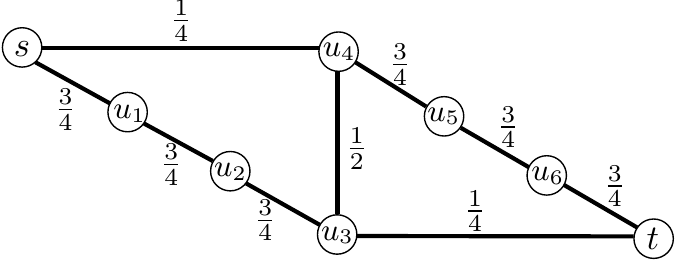}
    \caption{The graph $G$ together with a feasible solution $y\in \R^V$.}\label{fig:graph}
\end{figure}

\noindent
We define $A\in \R^{V\times E}$ to be the signed incidence matrix of $G$ with edges directed according to increasing indices, let $b\defeq e_t - e_s=(-1,0,0,0,0,0,0,1)^\top$. Then the following problem:
$$\min \; \norm{x}_1 \qquad \mathrm{s.t. }\; \; Ax=b$$
is equivalent to the shortest $s-t$ path problem in $G$. The unique optimal solution is the path $s-u_4-u_3-t$. In particular, the edge $(u_3,u_4)$ is in the support of the optimal vector.

\begin{claim}\label{claim:graph}
Let $y\in \R^E$ be a feasible point given in the Figure~\ref{fig:graph}, i.e. $y_{u_0u_1}=y_{u_1u_2}=y_{u_2u_3}=y_{u_4u_5}=y_{u_5u_6}=y_{u_6u_7}=\frac{3}{4}$, $y_{u_0u_4}=y_{u_3u_7}=\frac14$ and $y_{u_3u_4}=\frac{1}{2}$. IRLS initialized at $y$ produces in one step a point $y'$ with $y'_{u_3y_4}=0$. 
\end{claim}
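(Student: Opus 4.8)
The plan is to read the single IRLS step as an electrical-flow computation and then recognize the instance as a \emph{balanced Wheatstone bridge}, on which the bridge edge carries no current. By the closed-form update quoted earlier, $y'$ is the minimizer of $\sum_{e}\frac{x_e^2}{|y_e|}$ over $\{x:Ax=b\}$; since $A$ is the signed incidence matrix of $G$ and $b=e_t-e_s$, this is exactly the unit electrical $s$--$t$ flow in $G$ in which edge $e$ has conductance $c_e=|y_e|$ (resistance $r_e=1/|y_e|$). Hence there are node potentials $\phi\in\R^V$ with $y'_e=|y_e|(\phi_v-\phi_u)$ for an edge oriented $v\to u$, and it suffices to show $\phi_{u_3}=\phi_{u_4}$, since then $y'_{u_3u_4}=\tfrac12(\phi_{u_3}-\phi_{u_4})=0$.

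First I would read off the resistances from the given $y$: the six path edges $u_0u_1,u_1u_2,u_2u_3,u_4u_5,u_5u_6,u_6u_7$ get $r=\tfrac43$, the edges $u_0u_4$ and $u_3u_7$ get $r=4$, and the bridge edge $u_3u_4$ gets $r=2$. Next I would eliminate the degree-two internal vertices $u_1,u_2,u_5,u_6$, which carry no injected current, via the series law: the top path $u_0\to u_3$ collapses to a single resistor of $3\cdot\tfrac43=4$, and the bottom path $u_4\to u_7$ likewise to $4$. This leaves the four-terminal network on $\{u_0,u_3,u_4,u_7\}$ with arms $u_0u_3,u_3u_7,u_0u_4,u_4u_7$ each of resistance $4$ and a bridge $u_3u_4$ of resistance $2$ --- precisely a Wheatstone bridge.

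The key step is the balance condition: the bridge current vanishes because the opposite arm products agree, $r_{u_0u_3}\,r_{u_4u_7}=16=r_{u_3u_7}\,r_{u_0u_4}$. I would justify $\phi_{u_3}=\phi_{u_4}$ cleanly using the automorphism $\tau$ of the \emph{reduced} network that swaps $u_3\leftrightarrow u_4$ while fixing $u_0$ and $u_7$ (it sends arm $u_0u_3$ to $u_0u_4$ and $u_3u_7$ to $u_4u_7$, all of equal resistance); since $\tau$ fixes the boundary data $\phi_{u_0},\phi_{u_7}$ and preserves the weighted Laplacian, uniqueness of the harmonic extension forces $\phi\circ\tau=\phi$, hence $\phi_{u_3}=\phi_{u_4}$. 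This $\tau$ exists only after the series reduction, which is why that reduction is the load-bearing move; there is also a source--sink symmetry $\sigma:u_i\mapsto u_{7-i}$ of the \emph{original} graph giving $\phi_{u_3}+\phi_{u_4}=1$, which by itself is not enough but pins the common value to $\tfrac12$.

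The main obstacle, such as it is, is not any hard estimate but getting two bookkeeping steps exactly right: (i) verifying that the IRLS minimizer genuinely is the electrical flow with these conductances, and that the sign convention in $b=e_t-e_s$ is immaterial to the conclusion; and (ii) justifying the series reduction so that the swap symmetry becomes available. As a fully explicit fallback that sidesteps the reduction, I would set $\phi_{u_0}=1,\ \phi_{u_7}=0$, use the $\sigma$-antisymmetry to write $\phi_{u_4}=1-\phi_{u_3}$, reduce the harmonicity equations at $u_1,u_2,u_3$ to a $3\times 3$ linear system, and solve it to obtain $\phi_{u_3}=\phi_{u_4}=\tfrac12$, whence $y'_{u_3u_4}=0$.
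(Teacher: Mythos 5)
Your proposal is correct and follows essentially the same route as the paper: interpret the IRLS step as the unit electrical $s$--$t$ flow with conductances $|y_e|$ and show the bridge edge $(u_3,u_4)$ carries no current because $\phi_{u_3}=\phi_{u_4}$, which the paper justifies by noting the paths $s\text{--}u_4$ and $s\text{--}u_1\text{--}u_2\text{--}u_3$ have equal resistance. Your series reduction to a balanced Wheatstone bridge, the automorphism/uniqueness argument, and the explicit linear-system fallback simply make rigorous what the paper dismisses with ``one can easily see,'' so the two proofs are the same in substance.
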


\noindent
The above claim implies that IRLS initialized at $y$ (which has full support) does not converge to the optimal solution, which has $1$ in the coordinate corresponding to $u_3u_4$. Thus to prove Theorem~\ref{thm:nonconvergence} it suffices to show Claim~\ref{claim:graph}.
\begin{proofof}{of Claim~\ref{claim:graph}}
IRLS chooses the next point $y'\in \R^E$ according to the rule:
$$y' = \argmin_{y\in \R^E} \sum_{e\in E} \frac{x_e^2}{y_e} \qquad \mathrm{s.t. } \; \; Ax=b$$
which is the same as the unit electrical $s-t$ flow in $G$ corresponding to edge resistances $\frac{1}{y_e}$.\footnote{This is due to the fact that electrical flows minimize energy.} One can easily see that in such electrical flow the potentials of $u_4$ and $u_3$ are equal (the paths $s-u_4$ and $s-u_1-u_2-u_3$ have equal resistances), hence the flow through $(u_3,u_4)$ is zero. 
\end{proofof}

\bibliographystyle{alpha}
\bibliography{references}

\newcommand{\etalchar}[1]{$^{#1}$}
\begin{thebibliography}{DDFG10}

\bibitem[BBD{\etalchar{+}}13]{BBDKM13}
Luca Becchetti, Vincenzo Bonifaci, Michael Dirnberger, Andreas Karrenbauer, and
  Kurt Mehlhorn.
\newblock Physarum can compute shortest paths: Convergence proofs and
  complexity bounds.
\newblock In {\em Automata, Languages, and Programming - 40th International
  Colloquium, {ICALP} 2013, Riga, Latvia, July 8-12, 2013, Proceedings, Part
  {II}}, pages 472--483, 2013.

\bibitem[Bec15]{Beck15}
Amir Beck.
\newblock On the convergence of alternating minimization for convex programming
  with applications to iteratively reweighted least squares and decomposition
  schemes.
\newblock {\em SIAM Journal on Optimization}, 25(1):185--209, 2015.

\bibitem[BMV12]{BMV12}
Vincenzo Bonifaci, Kurt Mehlhorn, and Girish Varma.
\newblock Physarum can compute shortest paths.
\newblock In {\em Proceedings of the Twenty-Third Annual {ACM-SIAM} Symposium
  on Discrete Algorithms, {SODA} 2012, Kyoto, Japan, January 17-19, 2012},
  pages 233--240, 2012.

\bibitem[CRT06]{CRT06}
E.J. Candes, J.~Romberg, and T.~Tao.
\newblock Robust uncertainty principles: exact signal reconstruction from
  highly incomplete frequency information.
\newblock {\em Information Theory, IEEE Transactions on}, 52(2):489--509, 2006.

\bibitem[CT05]{CT05}
E.J. Candes and T.~Tao.
\newblock Decoding by linear programming.
\newblock {\em Information Theory, IEEE Transactions on}, 51(12):4203--4215,
  2005.

\bibitem[CY08]{CY08}
R.~Chartrand and Wotao Yin.
\newblock Iteratively reweighted algorithms for compressive sensing.
\newblock In {\em Acoustics, Speech and Signal Processing, 2008. ICASSP 2008.
  IEEE International Conference on}, pages 3869--3872, 2008.

\bibitem[DDFG10]{Daubechies10}
Ingrid Daubechies, Ronald DeVore, Massimo Fornasier, and C.~Si̇nan Güntürk.
\newblock Iteratively reweighted least squares minimization for sparse
  recovery.
\newblock {\em Communications on Pure and Applied Mathematics}, 63(1):1--38,
  2010.

\bibitem[DE03]{DE02}
David~L. Donoho and Michael Elad.
\newblock Optimally sparse representation in general (non-orthogonal)
  dictionaries via l1 minimization.
\newblock In {\em Proc. Natl. Acad. SCI. USA 100}, page 2197–2202, 2003.

\bibitem[DH01]{DH01}
D.L. Donoho and X.~Huo.
\newblock Uncertainty principles and ideal atomic decomposition.
\newblock {\em Information Theory, IEEE Transactions on}, 47(7):2845--2862,
  2001.

\bibitem[GR97]{GR97}
I.F. Gorodnitsky and B.D. Rao.
\newblock Sparse signal reconstruction from limited data using focuss: A
  re-weighted minimum norm algorithm.
\newblock {\em Trans. Sig. Proc.}, 45(3):600--616, March 1997.

\bibitem[Gre84]{Green84}
Peter~J Green.
\newblock Iteratively reweighted least squares for maximum likelihood
  estimation, and some robust and resistant alternatives.
\newblock {\em Journal of the Royal Statistical Society. Series B
  (Methodological)}, pages 149--192, 1984.

\bibitem[IJNT11]{IJNT11}
K.~{Ito}, A.~{Johansson}, T.~{Nakagaki}, and A.~{Tero}.
\newblock {Convergence Properties for the Physarum Solver}.
\newblock {\em ArXiv e-prints}, January 2011.

\bibitem[JZ12]{JZ12}
Anders Johannson and James Zou.
\newblock A slime mold solver for linear programming problems.
\newblock In {\em How the World Computes}, volume 7318 of {\em Lecture Notes in
  Computer Science}, pages 344--354. Springer Berlin Heidelberg, 2012.

\bibitem[MO07]{MO07}
Tomoyuki Miyaji and Isamu Ohnishi.
\newblock Mathematical analysis to an adaptive network of the plasmodium
  system.
\newblock {\em Hokkaido Math. J.}, 36(2):445--465, 2007.

\bibitem[NYT00]{NTY00}
Toshiyuki Nakagaki, Hiroyasu Yamada, and Agota Toth.
\newblock {Maze-solving by an amoeboid organism.}
\newblock {\em Nature}, 407(6803):470, September 2000.

\bibitem[Osb85]{Osborne85}
M.~R. Osborne.
\newblock {\em Finite Algorithms in Optimization and Data Analysis}.
\newblock 1985.

\bibitem[RKD99]{RKD99}
B.D. Rao and K.~Kreutz-Delgado.
\newblock An affine scaling methodology for best basis selection.
\newblock {\em Signal Processing, IEEE Transactions on}, 47(1):187--200, Jan
  1999.

\bibitem[SV16a]{SV15}
Damian Straszak and Nisheeth~K. Vishnoi.
\newblock Natural algorithms for flow problems.
\newblock In {\em ACM-SIAM Symposium on Discrete Algorithms}, 2016.

\bibitem[SV16b]{SVLP15}
Damian Straszak and Nisheeth~K. Vishnoi.
\newblock On a natural dynamics for linear programming.
\newblock In {\em ACM Innovations in Theoretical Computer Science}, 2016.

\bibitem[TKN07]{TKN07}
Atsushi Tero, Ryo Kobayashi, and Toshiyuki Nakagaki.
\newblock {A mathematical model for adaptive transport network in path finding
  by true slime mold}.
\newblock {\em Journal of Theoretical Biology}, 244(4):553, 2007.

\end{thebibliography}
\end{document}